\newtheorem{theorem}{Theorem} 
\newtheorem{lemma}{Lemma} 
\theoremstyle{definition}
\newtheorem{remark}{Remark}  
\newcommand{\E}{\mathbb{E}}
\newcommand{\R}{\mathbb{R}}
\renewcommand{\P}{\mathbb{P}}
\newcommand{\Z}{\mathbb{Z}}
\newcommand{\diag}{\mbox{diag}}
\newcommand{\bs}{\boldsymbol}
\renewcommand{\baselinestretch}{1}
\begin{document}

\title{Statistical Inference in a Directed Network Model with Covariates
\footnote{Shortly after finishing the first draft of this paper, we were saddened to hear Steve Fienberg's death. We dedicate this work to his memory.}
}

\author{
Ting Yan\thanks{Department of Statistics, Central China Normal University, Wuhan, 430079, China.
\texttt{Email:} tingyanty@mail.ccnu.edu.cn.}
\hspace{8mm}
Binyan Jiang\thanks{Department of Applied Mathematics, Hong Kong Polytechnic University, Hong Kong. \texttt{Email:}by.jiang@polyu.edu.hk.}
\hspace{8mm}
Stephen E. Fienberg\thanks{Department of Statistics,
Heinz College,
Machine Learning Department,
Cylab,
Carnegie Mellon University,
Pittsburgh, PA 15213, USA. \texttt{Email:} fienberg@stat.cmu.edu.}
\hspace{8mm}
Chenlei Leng\thanks{Corresponding author. Department of Statistics,
University of Warwick and Alan Turing Institute,
Coventry, CV4 7AL,
UK.
\texttt{Email:} C.Leng@warwick.ac.uk.}
}

\date{}

\maketitle

\begin{abstract}
Networks are often characterized by node heterogeneity for which nodes exhibit different degrees of interaction and link homophily for which nodes sharing common features tend to associate with each other. In this paper, we rigorously study a directed network model that captures the former via node-specific parametrization and the latter by incorporating covariates. In particular, this model quantifies the extent of heterogeneity in terms of outgoingness and incomingness of each node by different parameters, thus allowing the number of heterogeneity parameters to be twice the number of nodes.
We study the maximum likelihood estimation of the model and establish the uniform consistency and asymptotic normality of the resulting estimators.
Numerical studies demonstrate our theoretical findings and two data analyses confirm the usefulness of our model.
\vskip 5 pt \noindent
\textbf{Key words}: Asymptotic normality; Consistency; Degree heterogeneity; Directed network; Homophily;
Increasing number of parameters; Maximum likelihood estimator. \\

\end{abstract}



\renewcommand{\baselinestretch}{1.1}\selectfont

\section{Introduction}

Most complex systems involve multiple entities that interact with each other. These interactions are often conveniently represented as networks in which nodes act as entities and a link between two nodes indicates an interaction of some form between the two corresponding entities.
The study of networks has attracted increasing attention in a wide variety of fields including social networks \citep{Burt:Kilduff:Tasselli:2013, Lewisa:Gonzaleza:Kaufmanb:2012}, communication networks \citep{Adamic:Glance:2005, Diesner:Carley:2005}, biological networks \citep{Bader:Hogue:2003, Nepusz:Yu:Paccanaro:2012}, disease transmission networks \citep{Newman:2002} and so on.
Many statistical models have been developed for analyzing networks in the hope to understand their generative mechanism.
However, it remains a unique challenge to understand the statistical properties of many network models; for surveys, see  \cite{Goldenberg2009}, \cite{Fienberg:2012}, and a book long treatment of networks in \cite{Kolaczyk:2009}.

Many networks are characterized by two distinctive features. The first is the so-called \textit{degree heterogeneity} for which nodes exhibit different degrees of interaction.  In the language of \cite{Barabasi:Bonabau:2003}, a typical network often includes a handful of high degree ``hub'' nodes having many edges and many low degree individuals having few edges. The second distinctive feature inherent in most natural and synthetic networks is the so-called \textit{homophily} phenomenon for which links tend to form between nodes sharing common features such as age and sex; see, for example, \cite{McPherson:Lynn:Cook:2001}.  As the name suggests, homophily is best explained by node or link specific covariates used to define similarity between nodes. As a concrete example, we examine the directed friendship network between $71$ lawyers studied in \cite{Lazega:2001} that motivated this paper. The detail of the data can be found in Section 4. As is typical for interactions of this sort, various members' attributes, including formal status (partner or associate), practice (litigation or corporate) etc., are also collected. A major question of interest is whether and how these covariates influence how ties are formed. Towards this end, we plot the network in Figure \ref{figure-data} using red and blue colors to indicate different statuses in (a) and black and green colors to represent lawyers with different practices in (b). To appreciate the difference in the degrees of  connectedness, we use node sizes to represent in-degrees in (a) and out-degrees in (b). This figure highlights a few interesting features. First, there is substantial degree heterogeneity. Different lawyers have different in-degrees and out-degrees, while the in-degrees and the out-degrees of the same lawyers can also be substantially different. This necessitates a model which can characterize the node-specific outgoingness and incomingness. Second, ties seem to form more frequently if the vertices share a common status or a common practice. As a result, a useful model should account for the covariates in order to explain the observed homophily phenomenon.

\begin{figure}[hbt!]
\centering
  \caption{Visualization of Lazega's friendship network among $71$ lawyers.
The vertex sizes
are proportional to either nodal in-degrees in (a) or out-degrees in (b). The positions of the vertices are the same in (a) and (b).
For nodes with degrees less than $5$, we set their sizes the same (as a node with degrees 4). In (a), the colors indicate different statuses (red for partner and blue for associate), while in (b), the colors represent different practices (black for litigation and green for corporate).
}
\begin{subfigure}{.5\textwidth}
  \centering
  \includegraphics[width=\linewidth]{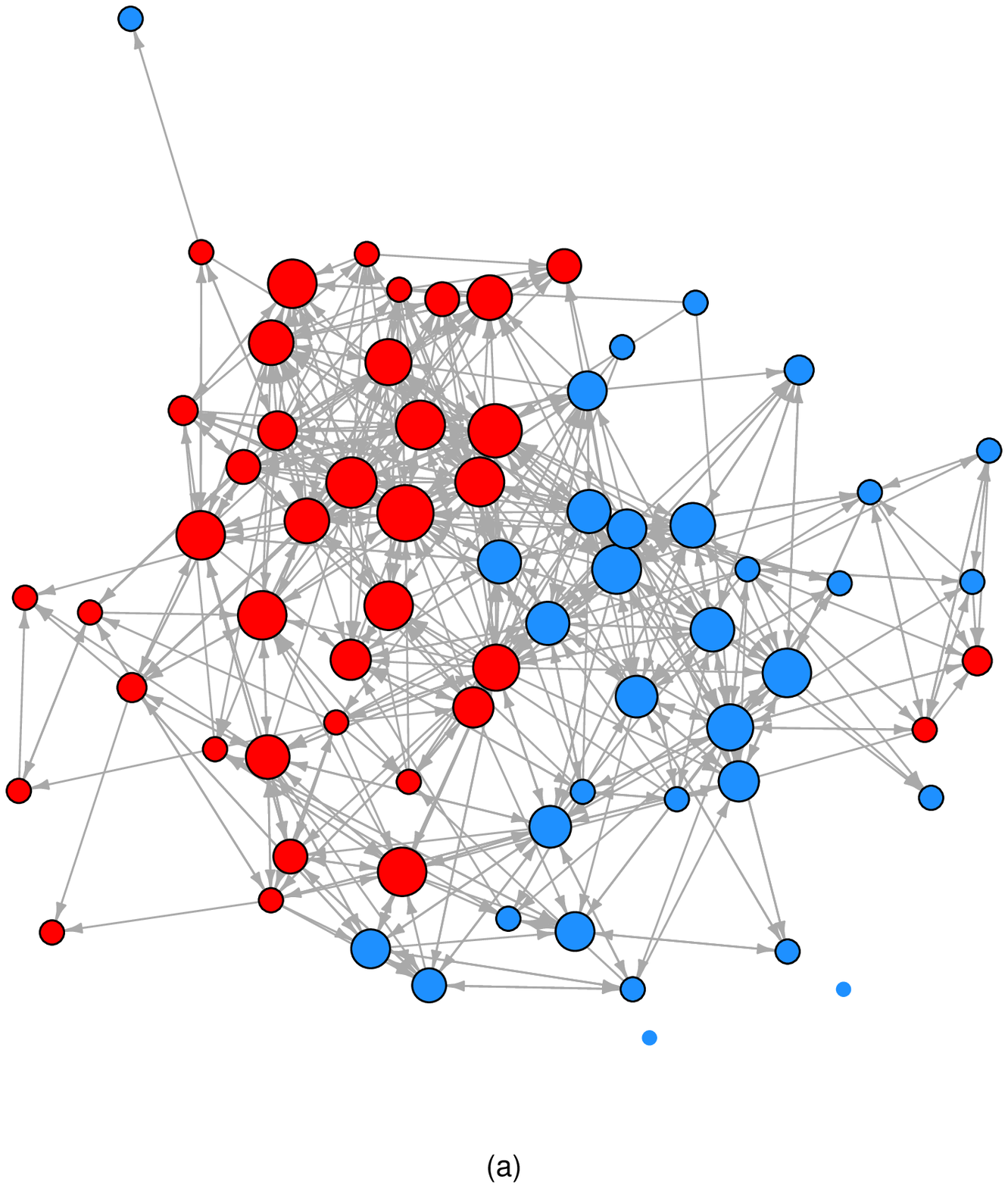}
\end{subfigure}%
\begin{subfigure}{.5\textwidth}
  \centering
  \includegraphics[width=\linewidth]{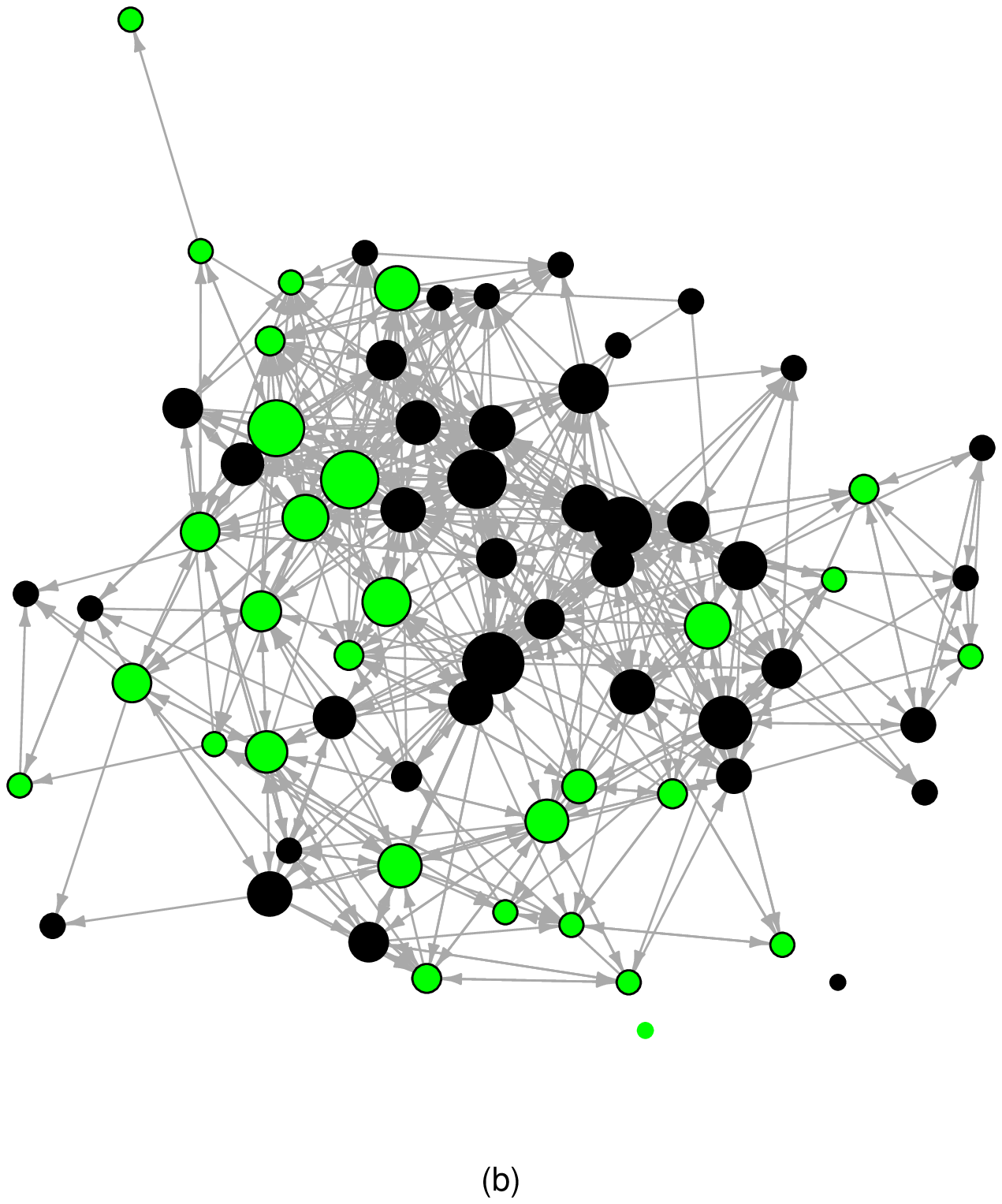}
 \end{subfigure}
\label{figure-data}
\end{figure}

This paper concerns the study of a generative model for directed networks seen in Figure \ref{figure-data} that addresses node heterogeneity and link homophily simultaneously.  Although this model is not entirely new, developing its inference tools is extremely challenging and we have only started to see similar tools for models much simpler when homophily is not considered \citep{Yan:Leng:Zhu:2016}.
Let's start by spelling out the model first. Consider a directed graph $\mathcal{G}_n$ on $n \geq 2$ nodes labeled by $1, \ldots, n$.
Let $a_{ij}\in\{0, 1\}$ be an indictor whether there is a directed edge
from node $i$ pointing to $j$. That is, if there is a directed edge from $i$ to $j$,
then $a_{ij}=1$; otherwise, $a_{ij}=0$.
Denote $A=(a_{ij})_{n\times n}$ as the adjacency matrix of $\mathcal{G}_n$.
We assume that there are no self-loops, i.e., $a_{ii}=0$.
Our model postulates that $a_{ij}$'s follow independent Bernoulli distributions such that a directed link exists from node $i$ to node $j$ with probability
\[
P( a_{ij}= 1) =
\frac{ \exp(  Z_{ij}^\top \bs{\gamma} + \alpha_{i}+ \beta_{j}) }{ 1 + \exp( Z_{ij}^\top \bs{\gamma} + \alpha_{i}+ \beta_{j} ) }.
\]
In this model, the degree heterogeneity of each node is parametrized by two scalar parameters, an incomingness parameter denoted by $\beta_i$ characterizing how attractive the node is and an outgoingness parameter denoted by $\alpha_{i}$ illustrating the extent to which the node is attracted to others \citep{Holland:Leinhardt:1981}. The covariate $Z_{ij}$ is either a link dependent vector or a function of node-specific covariates.  If $X_i$ denotes  a vector of node-level attributes, then these node-level attributes can be used to construct a $p$-dimensional vector $Z_{ij}=g(X_i, X_j)$, where
$g(\cdot, \cdot)$ is a function of its arguments. For instance, if we let $g(X_i, X_j)$ equal to $\|X_i - X_j\|_1$, then it measures the similarity between node $i$ and $j$ features. The vector $\bs{\gamma}$ is an unknown parameter that characterizes the tendency of two nodes to make a connection. Apparently in our model, a larger $Z_{ij}^\top \bs{\gamma}$ implies a higher likelihood for node $i$ and $j$ to be connected. For the friendship network in Figure \ref{figure-data}, for example, the covariate vector may include two covariates, one indicating whether the two nodes share a common status and the other indicating whether their practices are the same.
{
Though similar models for capturing homophily and degree heterogeneity have been considered
by \cite{Dzemski:2014} for a general distribution function and \cite{Graham:2017} in the undirected case,
they focused on the homophily parameter and the inference problem for degree heterogeneity was not studied.
Because the formation of networks is not only influenced by external factors (e.g., dyad covariates), but also affected by
intrinsic factors (e.g., the strengths of nodes to form network connection),  it is statistically interesting to conduct inference on the parameter associated with degree heterogeneity. }

Model \eqref{Eq:density:whole} assumes the independence of the network edges.
As pointed out by \cite{Graham:2017}, the independent assumption may hold in some settings
where the drivers of link formation are predominately bilateral in nature,
as may be true in some trade networks as well as in models of (some types of)
conflict between nation-states.

Since the $n(n-1)$ random variables $a_{i,j}$, $i \neq j$, are mutually independent given the covariates, the probability of observing $\mathcal{G}_n$ is simply
\begin{equation}\label{Eq:density:whole}
 \prod_{i,j=1;i\neq j}^n
 \frac{ \exp \big((Z_{ij}^\top \bs{\gamma} +\alpha_i+\beta_j ) a_{ij} \big)}{1 + \exp( Z_{ij}^\top \bs{\gamma} + \alpha_{i}+ \beta_{j} )}= \exp \big(  \sum_{i,j}a_{ij}Z_{ij}^\top \bs{\gamma} + \bs{\alpha}^\top \mathbf{d} + \bs{\beta}^\top \mathbf{b} - C(\bs{\alpha}, \bs{\beta}, \bs{\gamma} ) \big),
\end{equation}
where
\[C(\bs{\alpha}, \bs{\beta}, \bs{\gamma})=\sum_{i\neq j} \log \big( 1 + \exp( Z_{ij}^\top \bs{\gamma} + \alpha_{i}+ \beta_{j} )\big)\]
is the normalizing constant.
Here $d_{i}= \sum_{j \neq i} a_{ij}$ denotes the out-degree of vertex $i$ and $\mathbf{d}=(d_1, \ldots, d_n)^\top$ is the out-degree sequence of the graph $\mathcal{G}_n$. Similarly, $b_j = \sum_{i \neq j} a_{ij}$ denotes the in-degree of vertex $j$ and $\mathbf{b}=(b_1, \ldots, b_n)^\top$ is the in-degree sequence.  The pair $\{\mathbf{b}, \mathbf{d}\}$ or $\{(b_1, d_1), \ldots, (b_n, d_n)\}$ is the so-called bi-degree sequence. As discussed before, $\bs{\alpha}=(\alpha_1, \ldots, \alpha_n)^\top$ is a parameter vector tied to the out-degree sequence, and $\bs{\beta}=(\beta_1, \ldots, \beta_n)^\top$ is a parameter vector tied to the in-degree sequence,
and $\bs{\gamma}=(\gamma_1, \ldots, \gamma_p)^\top$ is a parameter vector tied to the information of node covariates. Since an out-edge from vertex $i$ pointing to $j$ is the in-edge of $j$ coming from $i$, it is immediate that
the sum of out-degrees is equal to that of in-degrees.
If one transforms $(\bs{\alpha}, \bs{\beta})$ to $(\bs{\alpha}-c, \bs{\beta}+c)$, the likelihood does not change. Because of this,
for the identifiability of the model, we set $\beta_n=0$ as in \cite{Yan:Leng:Zhu:2016}.
{Since we treat $Z_{ij}$ as observed and the likelihood function \eqref{Eq:density:whole} is conditional on $Z_{ij}$'s, we assume that all $Z_{ij}$'s are bounded.
Therefore, the natural parameter space is
\[
\Theta = \{ (\bs{\alpha}^\top, \bs{\beta}_{1, \ldots, n-1}^\top, \bs{\gamma}^\top)^\top:   (\bs{\alpha}^\top, \bs{\beta}_{1, \ldots, n-1}^\top, \bs{\gamma}^\top)^\top \in R^{2n+p-1} \},
\]
under which the normalizing constant is finite.
}

Because of the form of the model and the independent assumption on the links, it appears that maximum likelihood estimation developed for logistic regression is all that is needed for inference. A major challenge of models of this kind is, however, that the number of parameters grows with the network size. In particular, the number of outgoingness and incomingness parameters needed by our model is already twice the size of the network, and the presence of the covariates poses additional challenges. See the literature review below. To a certain extent, our model can be seen as a special case of the exponential random graph model (ERGM) as discussed by \cite{Robins.et.al.2007a, Robins.et.al.2007b}, as the sufficient statistics are the covariates and the bi-degree sequence. It is known, however, that fitting any nontrivial exponential random graph models is extremely challenging, not to mention developing valid procedures for their statistical inference \citep{Goldenberg2009,Fienberg:2012}.
Studying the asymptotic theory of the proposed directed network model is the main contribution of this paper.

We empirically explore the asymptotic properties of the proposed estimators of the heterogeneity parameters $\bs{\alpha}$ and $\bs{\beta}$,
as well as the homophily parameter $\bs{\gamma}$. Our results demonstrate that the empirical study concur with our theoretical findings.
Two real data examples are also provided for illustration.

\subsection{Literature review}
Many network characteristics or configurations can be easily modeled
as exponential family distributions on graphs \citep{Robins.et.al.2007a, Robins.et.al.2007b}. For undirected networks, if we put the node degrees as the sufficient statistics, then the model explains
the observed degree heterogeneity but not homophily. This model is referred to as the $\beta$-model by \cite{Chatterjee:Diaconis:Sly:2011}.
Exploring the properties of the $\beta$-model and its generalizations, however, is nonstandard due to an increasing dimension of the parameter space
and has attracted much recent interest
\citep{Chatterjee:Diaconis:Sly:2011, Perry:Wolfe:2012, Olhede:Wolfe:2012, Hillar:Wibisono:2013, Yan:Xu:2013, Rinaldo2013,  Graham:2017, Karwa:Slakovic:2016}.
In particular, \cite{Chatterjee:Diaconis:Sly:2011} proved the uniform consistency of the maximum likelihood estimator (MLE) and \cite{Yan:Xu:2013} derived
the asymptotic normality of the MLE.
In the directed case, \cite{Yan:Leng:Zhu:2016} studied the MLE of a directed version of the $\beta$-model which is a special case of the $p_1$ model by \cite{Holland:Leinhardt:1981}.
 \cite{Yan:Leng:Zhu:2016} did not consider modelling homophily.
By treating the  node-specific parameters in the $p_1$ model as random effects, \cite{VanDuijnSnijdersZijlstra:2004} proposed a random effects model  incorporating nodal covariates.  The theoretical properties of the MLE of this model are difficult to establish and thus have not been studied.
\cite{Fellows:Handcock:2012} generalized exponential random graph models
by modeling nodal attributes as random
variates. However, the theoretical properties of their model are not explored. \cite{Hoff:2009}  appears to be among the first to study the model in \eqref{Eq:density:whole}. However, the theoretical properties of Hoff's model are again unknown.

It is also worth noting that the consistency and asymptotic normality of the MLE have been derived for two related models:
the Rasch model \citep{Rasch1960} for item response experiments \citep{Haberman1977} and the Bradley-Terry model \citep{bradleyterry52} for paired comparisons by \cite{simons1999} in which a growing number of parameters are modelled.
The data for an item response experiment can be represented as  a bipartite network and for a paired comparisons data as a weighted directed network.
None of these papers discussed how to incorporate covariates.
Finally, Model \eqref{Eq:density:whole} can also be represented as a log-linear model \citep{Fienberg:Rinaldo:2012}.
{Although the necessary and sufficient conditions for
the existence of the MLE for log-linear models with arbitrary dimension have been established [e.g., \cite{Haberman:1974, Fienberg:Rinaldo:2012}],
there is lack of general results on the asymptotic properties of the MLE for high dimensional
log-linear models as the analysis would be challenging [\cite{Erosheva:Fienberg:Joutard:2007, Fienberg:Rinaldo:2007, Fienberg:Rinaldo:2012, Rinaldo:Petrovic:Fienberg:2011}].}

In the above mentioned network models, the dyads of network edges between two nodes are assumed to be mutually independent.
If network configurations such as $k$-stars and triangles are included as sufficient statistics in the ERGMs, then edges are not independent
and such models incur the problem of model degeneracy in the sense of \cite{Handcock:2003}, in which almost all realized
graphs essentially have no edges or are complete, completely skipping all intermediate structures.
\cite{Chatterjee:Diaconis:2013} have shown that most realizations from many ERGMs
 look like the results of a simple Erdos-Renyi model and
given a first rigorous proof of the degeneracy observed in the ERGM with the counts of edges and triangles
as the exclusively sufficient statistics.
\cite{Yin:2015} further gave an explicit characterization of the degenerate tendency as a function of the parameters.
On the other hand, the MLE in ERGMs with dependent structures also incur problematic properties.
\cite{Shalizi:Rinaldo:2013} demonstrated that the MLE is not consistent.
In order to overcome the mode degeneracy in ERGMs, \cite{Schweinberger:Handcock:2015}
have proposed local dependent ERGMs by assuming that the graph nodes can be partitioned into $K$ subsets
(correspondingly, $K$ subgraphs), in which dependence exists within subgraphs and edges are independence
between subgraphs. Based on this assumption, they established a central limit theorem for a network statistic
by referring to the Lindeberg--Feller central limit theorem
when $K$ goes to infinity and the number of nodes in subgraphs is fixed.
The local dependency assumption essentially contains a sequence of independent networks.
On the other hand, some refined network statistics such as ``alternating $k$-stars", ``alternating $k$-triangles" and so on
in \cite{Robins.et.al.2007b} are proposed, but the theoretical properties of the model are still unknown.
Moreover, \cite{Sadeghi:Rinaldo:2014} formalized the ERGM for the joint degree distributions and derived the condition under which
the MLE exists.

The work close to our paper is \cite{Graham:2017} in which the $\beta$-model was generalized to incorporate covariates to explain the homophily phenomenon and degree heterogeneity for undirected networks.  The asymptotic properties of a restricted version of the maximum likelihood
estimator were derived under the assumptions that all parameters are bounded and  that the estimators for
all parameters are taken in one compact set. That is, his results are only applicable to dense networks as pointed out in  \cite{Graham:2017}. In this paper, our focus is on directed networks and our theory is established under more relaxed assumptions. In particular the boundedness assumption on the parameters of degree heterogeneity in \cite{Graham:2017} is not needed in our work. Hence our result covers more general networks. In addition, \cite{Graham:2017} has focused on the consistency and the asymptotic normality of the parameter estimator associated with covariates, while the asymptotic normality of the heterogeneity parameter estimator was not studied. In this paper, we derive these two properties for the covariate parameter and the heterogeneity parameters in model \eqref{Eq:density:whole}. It is worth remarking that establishing the asymptotic normality for estimators of $\bs{\alpha}$ and $\bs{\beta}$ is very challenging with the presence of the covariate $Z$. {\cite{Graham:2016} further proposed a dynamic model to capture
 homophily and transitivity when an undirected network over multiple periods  is observed. The setup is different from ours in that we only observe one network once.}
{
Moreover, \cite{Jochmans:2017} developed a conditional-likelihood based approach to estimate the homophily parameter by constructing a quadruple sufficient statistic to eliminate the degree heterogeneity parameter, and further established the consistency and
asymptotic normality of the resulting estimator.
}

{
To some extent, our network model is connected to  the longitudinal panel data model considered by \cite{FVW2016}  and \cite{Cruz-Gonzalez:Fern¡äandez-Val:Weidner:(2017)} where time and individual fixed effects are both considered. They focused mainly on the homophily parameter.  \cite{Dzemski:2017}  applied the method in \cite{FVW2016}   to a network model similar to ours by including a scalar parameter to characterize the correlation of dyads.  A two-step approach was used for estimation and again the focus is on the homophily parameter. There are major differences between these papers and ours including the methods of proofs, the conditions required by the theorems and the attention to the degree parameters. We will clarify these points after stating our main results in Section 3.
}

For the remainder of the paper, we proceed as follows.
In Section \ref{section:model}, we give the details on the model considered in this paper.
In section \ref{section:main}, we establish asymptotic results.
Numerical studies are presented in Section \ref{section:simulation}.  We provide further discussion and future work in Section \ref{section:summary}.
All proofs are relegated to the appendix.


\section{Maximum Likelihood Estimation}
\label{section:model}

We first introduce some notations.  Let $\R = (-\infty, \infty)$ be the real domain. For a subset $C\subset \R^n$, let $C^0$ and $\overline{C}$ denote the interior and closure of $C$, respectively.  For convenience, let $\bs{\theta}=(\alpha_1, \ldots, \alpha_n, \beta_1, \ldots, \beta_{n-1})^\top$
and $\mathbf{g}=(d_1, \ldots, d_n, b_1, \ldots, b_{n-1})^\top$. Sometimes, we use $\bs{\theta}$ and $(\bs{\alpha}, \bs{\beta})$
interchangeably. For a vector $\mathbf{x}=(x_1, \ldots, x_n)^\top\in \R^n$, denote by
$\|\mathbf{x}\|_\infty = \max_{1\le i\le n} |x_i|$ the $\ell_\infty$-norm of $\mathbf{x}$.  For an $n\times n$ matrix $J=(J_{ij})$, let $\|J\|_\infty$ denote the matrix norm induced by the $\ell_\infty$-norm on vectors in $\R^n$, i.e.
\[
\|J\|_\infty = \max_{\mathbf{x}\neq 0} \frac{ \|J\mathbf{x}\|_\infty }{\|\mathbf{x}\|_\infty}
=\max_{1\le i\le n}\sum_{j=1}^n |J_{ij}|.
\]
The notation $i<j<k$ is a shorthand for $\sum_{i=1}^n \sum_{j=i+1}^n \sum_{k=j+1}^n$.
A ``$*$" superscript on a parameter denotes its true value and may be omitted when doing so causes no
confusion.

In what follows, it is convenient to define the notation:
\[
p_{ij}(\bs{\gamma}, \alpha_i, \beta_j) =
\frac{ \exp( Z_{ij}^\top \bs{\gamma} + \alpha_i + \beta_j) }{ 1 + \exp(Z_{ij}^\top \bs{\gamma} + \alpha_i + \beta_j) }.
\]
The log-likelihood of observing a directed network $\mathcal{G}_n$  under model \eqref{Eq:density:whole} is
\begin{equation}\label{eq:likelihood}
\begin{array}{rcl}
\ell (\bs{\gamma}, \bs{\alpha}, \bs{\beta}) & = & \sum_{i \neq j} \{ a_{ij} \log p_{ij}(\bs{\gamma}, \alpha_i, \beta_j) +
(1- a_{ij}) \log (1-p_{ij}(\bs{\gamma}, \alpha_i, \beta_j)) \} \\
& = & \sum_{i\neq j} a_{ij} Z_{ij}^\top \bs{\gamma} + \sum_{i=1}^n \alpha_i d_i + \sum_{j=1}^n \beta_j b_j - \sum_{i\neq j} \log( 1+ e^{Z_{ij}^\top \bs{\gamma}
+ \alpha_i + \beta_j } ).
\end{array}
\end{equation}
The score equations for the vector parameters $\bs{\gamma}, \bs{\alpha}, \bs{\beta}$ are easily seen as
\renewcommand{\arraystretch}{1.2}
\begin{equation}\label{eq:likelihood-binary}
\large
\begin{array}{rcl}
\sum_{i\neq j} a_{ij}Z_{ij} &=& \sum_{i\neq j} \frac{Z_{ij}e^{Z_{ij}^\top \bs{{\gamma}} + \alpha_i + \beta_j } }
{1+ e^{Z_{ij}^\top \bs{\gamma} + \alpha_i + \beta_j } }, \\
d_i & = & \sum_{k=1, k\neq i}^n \frac{ e^{Z_{ij}^\top \bs{\gamma} + \alpha_i + \beta_k } }
{ 1+ e^{Z_{ij}^\top \bs{\gamma} + \alpha_i + \beta_k}},~~~i=1,\ldots, n, \\
b_j & = & \sum_{k=1,k\neq j}^n \frac{ e^{Z_{ij}^\top \bs{\gamma} + \alpha_k + \beta_j }}
{ 1 + e^{Z_{ij}^\top \bs{\gamma} + \alpha_k + \beta_j } },~~j=1,\ldots, n-1.
\end{array}
\end{equation}
The MLEs of the parameters are the solution of the above equations if they exist.
{
Let $\mathcal{K}$ be the convex hull of the set $\{ ({\bf d}^\top, {\bf b}_{1, \ldots, n-1}^\top, \sum_{i,j} a_{ij}Z_{ij}^\top )^\top: a_{ij}\in \{0, 1\}, 1\le i\neq j \le n \}$.
Since the function $C(\bs{\alpha}, \bs{\beta}, \bs{\gamma})$ is steep and regularly strictly convex, the MLE of $(\bs{\alpha}, \bs{\beta}, \bs{\gamma})$ exists if and only if
$({\bf d}^\top, {\bf b}_{1, \ldots, n-1}^\top, \sum_{i,j} a_{ij}Z_{ij}^\top )^\top$ lies in the interior of $\mathcal{K}$ [see, e.g., Theorem 5.5 in \cite{Brown:1986} (p. 148)].
}
When the number of nodes $n$ is small, we can simply use the R function ``glm" to solve \eqref{eq:likelihood-binary}. For relatively large $n$, this is no longer feasible as it is memory demanding to store the design matrix needed for $\alpha$ and $\beta$. In this case,
we recommend the use of a two-step iterative algorithm by alternating between solving the second and third equations in  \eqref{eq:likelihood-binary} via the fixed point method in \cite{Yan:Leng:Zhu:2016} and
solving the first equation in \eqref{eq:likelihood-binary} via some existing algorithm
for generalized linear models.

In this paper, we assume that $p$, the dimension of $Z$, is fixed and that the support of $Z_{ij}$ is $\Z^p$, where $\Z$ is a compact subset of $\R$.
{For example, if $Z_{ij}$'s are indictor variables such as sex, then the assumption holds.}
For the parameters $\bs{\alpha}$ and $\bs{\beta}$, we make no such assumption and allow them to diverge slowly with $n$, the network size. To be precise, as long as $\| \bs{\theta}^*\|_\infty$, the maximum entry of the true heterogeneity parameter, is bounded by a number proportional to $\log n$, our theory holds. See Theorem \ref{Theorem:binary:con} for example.
For technical reasons, it is more convenient to work with the following restricted
maximum likelihood estimators of $\bs{\alpha}, \bs{\beta}$ and $\bs{\gamma}$ defined as
\begin{equation}\label{definition:RMLE}
(\bs{\widehat{\gamma}}, \bs{\widehat{\alpha}}, \bs{\widehat{\beta}})
=\arg \max_{\bs{\gamma}\in \Gamma, \bs{\alpha}\in \R^n, \bs{\beta}\in \R^{n-1} } \ell(\bs{\gamma}, \bs{\alpha}, \bs{\beta}),
\end{equation}
where $\Gamma$ is a compact subset of $\R^p$ and $\bs{\widehat{\gamma}}=(\hat{\gamma}_1, \ldots,
\hat{\gamma}_p)^\top$,
$\bs{\widehat{\alpha}}=(\hat{\alpha}_1, \ldots, \hat{\alpha}_n)^\top$,
$\bs{\widehat{\beta}}=(\hat{\beta}_1, \ldots, \hat{\beta}_{n-1} )^\top$ are the respective restricted MLEs of
$\bs{\gamma}$, $\bs{\alpha}$ and $\bs{\beta}$,
and $\hat{\beta}_n=0$. Write $\bs{\widehat{\theta}}=( \bs{\widehat{\alpha}}, \bs{\widehat{\beta}})^\top$.
{
 Let $\tilde{\mathcal{K}}$ be the convex hull of the set constructed by all graphical bi-degree sequence $( {\bf d}^\top, {\bf b}_{1, \ldots, n-1}^\top)^\top$ and write
 $( \widehat{\bs{\alpha}}(\bs{\gamma}), \widehat{\bs{\beta}}(\bs{\gamma}))=\arg \min_{\bs{\alpha}, \bs{\beta}} \ell( \bs{\alpha}, \bs{\beta}, \bs{\gamma})$.
For every fixed $\gamma\in \Gamma$, by Theorem 5.5 in \cite{Brown:1986} (p. 148), the MLE $( \widehat{\bs{\alpha}}(\bs{\gamma}), \widehat{\bs{\beta}}(\bs{\gamma}))$ exists if and only if
$( {\bf d}^\top, {\bf b}_{1, \ldots, n-1}^\top)^\top$ lies in the interior of $\tilde{\mathcal{K}}$. Since $\Gamma$ is a compact set, the restricted MLE exists if and only if
$( {\bf d}^\top, {\bf b}_{1, \ldots, n-1}^\top)^\top$ lies in the interior of $\tilde{\mathcal{K}}$.
}

If $\bs{\widehat{\gamma}}$ lies in the interior of $\Gamma$, then it is also the global MLE of $\bs{\gamma}$.
Since we assume the dimension of $Z_{ij}$ is fixed and $\bs{\gamma}$ is one common parameter vector,
it seems reasonable to assume that $\|\bs{\gamma}\|$ is bounded by a constant.
If the restricted MLEs of $\bs{\widehat{\alpha}}$ and $\bs{\widehat{\beta}}$ exist, they would satisfy
the second and third equations in \eqref{eq:likelihood-binary}.
If $\bs{\widehat{\gamma}} \in \Gamma^0$, then it satisfies the first equation in \eqref{eq:likelihood-binary}.
Hereafter, we will work with the MLE defined in \eqref{definition:RMLE} and use ``MLE" to denote ``restricted MLE" for shorthand.

\section{Theoretical Properties}
\label{section:main}

\subsection{Characterization of the Fisher information matrix}

The Fisher information matrix is a key quantity in the asymptotic analysis as it measures the amount of information that a random variable carries about an unknown parameter of a distribution that models the random variable.
In order to characterize this matrix for the vector parameter $\bs{\theta}$ in our model \eqref{Eq:density:whole}, we introduce a general class of matrices that encompass the Fisher matrix. Given two positive numbers $m$ and $M$ with $M \ge m >0$, we say the $(2n-1)\times (2n-1)$ matrix $V=(v_{i,j})$ belongs to the class $\mathcal{L}_{n}(m, M)$ if the following holds:
\begin{equation}\label{eq1}
\begin{array}{l}
m\le v_{i,i}-\sum_{j=n+1}^{2n-1} v_{i,j} \le M, ~~ i=1,\ldots, n-1; ~~~ v_{n,n}=\sum_{j=n+1}^{2n-1} v_{n,j}, \\
v_{i,j}=0, ~~ i,j=1,\ldots,n,~ i\neq j, \\
v_{i,j}=0, ~~ i,j=n+1, \ldots, 2n-1,~ i\neq j,\\
m\le v_{i,j}=v_{j,i} \le M, ~~ i=1,\ldots, n,~ j=n+1,\ldots, 2n-1,~ j\neq n+i, \\
v_{i,n+i}=v_{n+i,i}=0,~~ i=1,\ldots,n-1,\\
v_{i,i}= \sum_{k=1}^n v_{k,i}=\sum_{k=1}^n v_{i,k}, ~~ i=n+1, \ldots, 2n-1.
\end{array}
\end{equation}
Clearly, if $V\in \mathcal{L}_{n}(m, M)$, then $V$ is a $(2n-1)\times (2n-1)$ diagonally dominant, symmetric nonnegative
matrix and $V$ has the following structure:
\[
V= \left(\begin{array}{ll} V_{11} & V_{12} \\
V_{12}^\top  & V_{22}
\end{array}\right) ,
\]
where $V_{11} \in \R^{n \times n}$ and $V_{22} \in \R^{(n-1)\times (n-1)}$ are diagonal matrices, $V_{12}$ is a nonnegative matrix whose non-diagonal elements are positive and diagonal elements equal to zero. One can easily show that the Fisher information matrix for the vector parameter $\bs{\theta}$ belongs to $\mathcal{L}_{n}(m, M)$ for any $\bs{\gamma} \in \Gamma$. The exact form of this matrix can be found after Theorem \ref{Theorem:binary:central}  in Section \ref{subsection:ar}. Thus, with some abuse of notation, we use $V$ to denote the Fisher information matrix for the vector
parameter $\bs{\theta}$ in the model \eqref{Eq:density:whole}.

Define $v_{2n,i}=v_{i,2n}:= v_{i,i}-\sum_{j=1;j\neq i}^{2n-1} v_{i,j}$ for $i=1,\ldots, 2n-1$ and $v_{2n,2n}=\sum_{i=1}^{2n-1} v_{2n,i}$. Then $m \le v_{2n,i} \le M$ for $i=1,\ldots, n-1$, $v_{2n,i}=0$ for $i=n, n+1,\ldots, 2n-1$ and $v_{2n,2n}=\sum_{i=1}^n v_{i, 2n}=\sum_{i=1}^n v_{2n, i}$. Because of the special structure of any matrix $V\in \mathcal{L}_{n}(m, M)$,
\cite{Yan:Leng:Zhu:2016} proposed to approximate its inverse $V^{-1}$ by the matrix $S=(s_{i,j})$, which is defined as
\begin{equation}
\label{definition:S}
s_{i,j}=\left\{\begin{array}{ll}\frac{\delta_{i,j}}{v_{i,i}} + \frac{1}{v_{2n,2n}}, & i,j=1,\ldots,n, \\
-\frac{1}{v_{2n,2n}}, & i=1,\ldots, n,~~ j=n+1,\ldots,2n-1, \\
-\frac{1}{v_{2n,2n}}, & i=n+1,\ldots,2n-1,~~ j=1,\ldots,n, \\
\frac{\delta_{i,j}}{v_{i,i}}+\frac{1}{v_{2n,2n}}, & i,j=n+1,\ldots, 2n-1,
\end{array}
\right.
\end{equation}
where $\delta_{i,j}=1$ when $i=j$ and $\delta_{i,j}=0$ when $i\neq j$.
They established an upper bound on the approximation errors, stated in the lemma below.
\begin{lemma} \label{lemma:inverse:appro}
If $V\in \mathcal{L}_n(m, M)$ with $M/m=o(n)$, then for large enough $n$,
$$\| V^{-1}-S \| \le \frac{c_1M^2}{m^3(n-1)^2},$$
where $c_1$ is a constant that does not depend on $M$, $m$ and $n$, and $\|A\|:=\max_{i,j} |a_{i,j}|$ for a general matrix $A=(a_{i,j})$.
\end{lemma}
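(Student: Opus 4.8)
The plan is to show that $S$ is a good approximate inverse of $V$ by first computing the residual $R:=VS-I$ explicitly, and then promoting a bound on $R$ to the stated entrywise bound via the identity
\[
V^{-1}-S=-V^{-1}(VS-I)=-V^{-1}R.
\]
Throughout I would use the block partition of $V$ recorded above (with $V_{11},V_{22}$ diagonal and $V_{12}$ carrying the cross terms), together with the three defining relations of $\mathcal{L}_n(m,M)$: for the out rows, $v_{ii}-\sum_{j=n+1}^{2n-1}v_{ij}=v_{i,2n}$, which is $0$ when $i=n$ and lies in $[m,M]$ for $i\le n-1$; for the in rows, $v_{ii}=\sum_{k=1}^n v_{ki}$; and the global relation $v_{2n,2n}=\sum_{i=1}^{n-1}v_{i,2n}\ge (n-1)m$.

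First I would compute $R=VS-I$ block by block. Writing $t:=1/v_{2n,2n}$ and $\mathbf{u}:=(v_{1,2n},\dots,v_{n,2n})^\top$, the three relations collapse all the sums and yield the explicit structure: the lower-right $(n-1)\times(n-1)$ block of $R$ vanishes identically; the upper-left $n\times n$ block equals the rank-one matrix $t\,\mathbf{u}\mathbf{1}_n^\top$ (its columns are constant); the $(1,2)$ block has entries $v_{a,n+l}/v_{n+l,n+l}-t\,v_{a,2n}$; and the $(2,1)$ block has entries $v_{b,n+l}/v_{bb}$. In particular every entry of $R$ is $O\!\big(M/((n-1)m)\big)$. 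The structure also gives two identities I would exploit heavily: since $\mathbf{1}_n^\top\mathbf{u}=v_{2n,2n}=1/t$, the $(1,1)$ block is idempotent, $R_{11}^2=R_{11}$; and the columns of the $(1,2)$ block sum to zero. These are exactly the cancellations built into the $1/v_{2n,2n}$ corrections of $S$, and they are what prevent the later estimates from losing powers of $n$.

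For the conversion, I would insert $V^{-1}=S+(V^{-1}-S)$ into $V^{-1}-S=-V^{-1}R$ to obtain the one-step expansion $V^{-1}-S=-SR+V^{-1}R^2$. The leading term here is not large: because $S$ approximates $V^{-1}$, the first-order parts of $SR=SVS-S$ cancel, and a direct (tedious) evaluation of the block products using the three relations should show $\max_{i,j}\lvert (SVS-S)_{i,j}\rvert=O\!\big(M^2/(m^3(n-1)^2)\big)$; this second-order cancellation is the technical heart. For the remainder $V^{-1}R^2$, the identities above keep $R^2$ under control (for instance the $(1,2)$ block of $R^2$ vanishes and $R_{11}^2=R_{11}$).

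The main obstacle is that crude operator-norm bounds on the remainder lose exactly one factor of $n$: one has $\|V^{-1}\|_\infty=O(1/m)$ and $\max_{i,j}\lvert (R^2)_{i,j}\rvert=O\!\big(M^2/((n-1)m^2)\big)$, so bounding $V^{-1}R^2$ by $\|V^{-1}\|_\infty\cdot\max_{i,j}\lvert (R^2)_{i,j}\rvert$ only yields $O\!\big(M^2/(m^3 n)\big)$, a factor $n$ short of the target. Recovering the sharp $1/(n-1)^2$ therefore forces an entrywise computation of $V^{-1}R^2$ that tracks the sign cancellations in sums of the form $\sum_k (V^{-1})_{i,k}(R^2)_{k,j}$, rather than a triangle-inequality bound. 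A further difficulty is that $V$ is only weakly diagonally dominant (the last out row and all $n-1$ in rows have zero dominance gap), so no Varah-type estimate for $\|V^{-1}\|$ is directly available; I would instead obtain the a priori control $\|V^{-1}\|_\infty=O(1/m)$ from the structure itself (the $S$ side already gives $\|S\|_\infty=O(1/m)$) and use it only on the genuinely lower-order pieces. The hypothesis $M/m=o(n)$ enters precisely at this point: it is what guarantees the entire second-order remainder stays below the first-order magnitude, so that the estimate closes and produces $\|V^{-1}-S\|\le c_1 M^2/\big(m^3(n-1)^2\big)$.
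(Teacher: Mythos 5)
First, a point of reference: the paper states this lemma \emph{without proof} --- it is imported from \cite{Yan:Leng:Zhu:2016}, where the argument (building on \cite{simons1999}) derives entrywise bounds on $V^{-1}$ directly rather than expanding the residual $R=VS-I$ as you do. Your first-order work is correct and verifiable: the block structure you give for $R$ is right (with $\mathbf{u}=(v_{1,2n},\dots,v_{n,2n})^\top$, $t=1/v_{2n,2n}$ and $\epsilon=(\mathbf{1}_n^\top,-\mathbf{1}_{n-1}^\top)^\top$ one has $V\epsilon=(\mathbf{u}^\top,\mathbf{0}^\top)^\top$, which yields exactly your four blocks), and your asserted bound on $SR$ is in fact provable cleanly rather than merely ``tedious'': writing $S=D+t\,\epsilon\epsilon^\top$, a direct computation gives $\epsilon^\top R=(u_1/v_{1,1},\dots,u_n/v_{n,n},0,\dots,0)$, so both $DR$ and $t\,\epsilon(\epsilon^\top R)$ are entrywise $O\bigl(M/(m^2(n-1)^2)\bigr)\le O\bigl(M^2/(m^3(n-1)^2)\bigr)$.

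The genuine gap is the remainder $V^{-1}R^2$, and as proposed it is circular, not just unfinished. Since $V^{-1}R^2=(S-V^{-1})R=-(V^{-1}-S)R$, an entrywise bound on $V^{-1}R^2$ at the target precision is \emph{equivalent} to the lemma itself; your plan to ``track sign cancellations in $\sum_k (V^{-1})_{i,k}(R^2)_{k,j}$'' presupposes exactly the entrywise knowledge of $V^{-1}$ that is to be proved, and the natural self-bounding inequality $\|W\|\le\|SR\|+(2n-1)\|W\|\,\|R\|$ with $W=V^{-1}-S$ does not close, because $(2n-1)\|R\|\asymp M/m$ is only $o(n)$, not less than one. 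Relatedly, your remark that the identities keep $R^2$ ``under control'' points the wrong way: idempotence of $R_{11}$ means the $(1,1)$ block of $R^2$ contains $R_{11}$ itself, which is only first-order small, $O(M/((n-1)m))$, so iterating the expansion gains nothing in that block; the true smallness of $V^{-1}R^2$ there rests on an exact cancellation --- e.g.\ $V^{-1}(\mathbf{u}^\top,\mathbf{0}^\top)^\top=\epsilon$ exactly (because $(\mathbf{u}^\top,\mathbf{0}^\top)^\top=V\epsilon$), played off against $V^{-1}(\mathbf{0}^\top,(R_{21}\mathbf{u})^\top)^\top\approx-\epsilon$, and verifying the latter approximation again requires entrywise control of $V^{-1}$. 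Finally, your a priori claim $\|V^{-1}\|_\infty=O(1/m)$ is unsupported as stated: $\|S\|_\infty=O(1/m)$ says nothing about $V^{-1}$ before the lemma is known, the zero-dominance-gap rows ($i=n$ and $i=n+1,\dots,2n-1$) rule out Varah-type bounds as you yourself note, and a spectral bound $\lambda_{\min}(V)\gtrsim m$ only yields $\|V^{-1}\|_\infty=O(\sqrt{n}/m)$. In short, you have correctly located where the factor of $n$ must be recovered, but supplied no non-circular mechanism for recovering it; the established proof sidesteps this by bounding the entries of $V^{-1}$ directly instead of through a residual expansion.
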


This lemma provides an accurate approximation of the inverse of the Fisher information matrix of $\bs{\theta}$ that has a close-form expression. As used throughout our theoretical development, this close-form expression greatly facilitates analytical calculations and makes the covariance matrix in the limiting distribution of the MLE be explicit.

\subsection{Asymptotic results}\label{subsection:ar}

We first establish the existence and consistency of $\widehat{\bs{\theta}}$. The main idea of the proof is as follows.
For every fixed $\bs{\gamma}\in \Gamma$, we define a system of functions
\begin{eqnarray}
\nonumber
F_{\gamma,i}(\bs{\theta}) &  =  & d_i -
\sum_{k=1; k \neq i}^n \frac{e^{Z_{ij}^\top\bs{\gamma}+\alpha_i+\beta_k}}{1+e^{Z_{ij}^\top\bs{\gamma}+\alpha_i+\beta_k} },~~~i=1,\ldots, n, \\
\label{eq:Fgamma}
F_{\gamma,n+j}(\bs{\theta}) & = & b_j - \sum_{k=1; k\neq j}^n \frac{e^{Z_{ij}^\top\bs{\gamma}+\alpha_k+\beta_j}}{1+e^{Z_{ij}^\top\bs{\gamma}+\alpha_k+\beta_j}},~~~j=1,\ldots, n, \\
\nonumber
F_\gamma(\bs{\theta}) & = & (F_{\gamma,1}(\bs{\theta}), \ldots, F_{\gamma,2n-1}(\bs{\theta}))^\top,
\end{eqnarray}
which are just the score equations for $\bs{\theta}$ with $\bs{\gamma}$ fixed.
Then we construct a Newton's iterative sequence $\{\bs{\theta}^{(k+1)}\}_{k=0}^\infty$ with  initial value $\bs{\theta}^{(0)}$,
where $\bs{\theta}^{(k+1)} = \bs{\theta}^{(k)} - [ F'(\bs{\theta}^{(k)})]^{-1} F(\bs{\theta}^{(k)})$.
If the iterative converges, then the solution lies in the neighborhood of $\bs{\theta}_0$.
This is done by establishing a geometrically fast convergence rate of the algorithm with the initial value as the true value.
This  technique is also used in \cite{Yan:Leng:Zhu:2016}.
We first present the consistency of the MLE $\widehat{\bs{\theta}}$ for estimating $\bs{\theta}$ in the following theorem,
whose proof is given in the supplementary material.

\begin{theorem}\label{Theorem:binary:con}
Assume that $\bs{\gamma}^*\in \Gamma^0$ and $\bs{\theta}^*\in \R^{2n-1}$ with $\|\bs{\theta}^*\|_\infty \le \tau \log n $, where $0<\tau<1/24$ is a constant,
and that $A \sim \P_{\bs{\gamma}^*, \bs{\theta}^*}$, where $\P_{\bs{\gamma}^*, \bs{\theta}^*}$ denotes
the probability distribution \eqref{Eq:density:whole} on $A$ under the parameters $\bs{\gamma}^*$ and $\bs{\theta}^*$.
Then as $n$ goes to infinity,
with probability approaching one, the MLE $\widehat{\bs{\theta}}$ exists
and satisfies
\[
\|\widehat{\bs{\theta}} - \bs{\theta}^* \|_\infty = O_p\left( \frac{ (\log n)^{1/2}e^{8\|\bs{\theta}^*\|_\infty} }{ n^{1/2} } \right)=o_p(1).
\]
Further, if $\widehat{\bs{\theta}}$ exists, it is unique.
\end{theorem}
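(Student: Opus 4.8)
The plan is to reduce the statement to a solvability question for the score system and then settle it by the Newton--Kantorovich fixed-point argument already sketched before the theorem. I would fix $\bs{\gamma}=\bs{\gamma}^*$ and view $F_{\gamma^*}(\bs{\theta})$ from \eqref{eq:Fgamma} as a map on $\R^{2n-1}$, whose zero is exactly the restricted MLE $\widehat{\bs{\theta}}$ at the true covariate parameter and whose Jacobian is, up to sign, the Fisher information $V(\bs{\theta})\in\mathcal{L}_n(m,M)$. Since $\ell(\bs{\gamma}^*,\cdot)$ is concave in $\bs{\theta}$, exhibiting one zero of $F_{\gamma^*}$ simultaneously establishes existence and identifies it with the global maximiser; the estimated $\bs{\gamma}$ only couples in through the compact set $\Gamma$, over which all the bounds below are uniform.

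The iteration needs two quantitative inputs. First, the size of the residual at the truth: the coordinates of $F_{\gamma^*}(\bs{\theta}^*)$ are the centred degrees $d_i-\E d_i$ and $b_j-\E b_j$, each a sum of $n-1$ independent $[0,1]$-bounded Bernoulli terms, so Hoeffding's inequality with a union bound over the $2n-1$ coordinates gives $\|F_{\gamma^*}(\bs{\theta}^*)\|_\infty=O_p((n\log n)^{1/2})$. Second, curvature control: boundedness of the covariates together with $\|\bs{\theta}^*\|_\infty\le\tau\log n$ forces every linear predictor to lie within a constant of $2\|\bs{\theta}^*\|_\infty$ in absolute value, so each variance factor satisfies $p_{ij}(1-p_{ij})\gtrsim e^{-2\|\bs{\theta}^*\|_\infty}$ and is at most $1/4$. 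Hence $V(\bs{\theta})\in\mathcal{L}_n(m,M)$ with $m\asymp e^{-2\|\bs{\theta}^*\|_\infty}$ and $M\le 1/4$, giving $M/m=O(n^{2\tau})=o(n)$, so Lemma \ref{lemma:inverse:appro} applies and $V^{-1}$ is approximated entrywise by the explicit matrix $S$, whose entries are of order $e^{2\|\bs{\theta}^*\|_\infty}/n$, with error of order $e^{6\|\bs{\theta}^*\|_\infty}/n^2$.

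I would then run $\bs{\theta}^{(k+1)}=\bs{\theta}^{(k)}-[F'_{\gamma^*}(\bs{\theta}^{(k)})]^{-1}F_{\gamma^*}(\bs{\theta}^{(k)})$ from $\bs{\theta}^{(0)}=\bs{\theta}^*$. Feeding the residual bound through $S$, and using the cancellation $\sum_{j\le n}F_j-\sum_{j>n}F_j=b_n-\E b_n=O_p(n^{1/2})$, the dominant contribution is $F_i/v_{ii}$, so the first displacement is $O_p(e^{2\|\bs{\theta}^*\|_\infty}(\log n/n)^{1/2})$. The Lipschitz continuity of $V(\cdot)$ (its third derivatives are dominated by the same variance factors) lets me verify a Kantorovich-type contraction, so the iterates remain in an $o(1)$ ball on which $m,M$ persist up to constants, and $\bs{\theta}^{(k)}$ converges geometrically to a zero $\widehat{\bs{\theta}}$. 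Carrying the approximation-error term of Lemma \ref{lemma:inverse:appro} and the quadratic remainder through the iteration inflates the exponent to the stated $O_p(e^{8\|\bs{\theta}^*\|_\infty}(\log n/n)^{1/2})$; since $\tau<1/24$ yields $e^{8\|\bs{\theta}^*\|_\infty}\le n^{8\tau}<n^{1/3}$, this is $o_p(1)$. Uniqueness is then immediate: $-V(\bs{\theta})$ is the Hessian, and the identifiability constraint $\beta_n=0$ removes the one-dimensional null space of the symmetric, diagonally dominant $V\in\mathcal{L}_n(m,M)$ so that $V$ is positive definite, making $\ell(\bs{\gamma}^*,\cdot)$ strictly concave and any stationary point the unique maximiser.

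The delicate point, and the one I expect to be the main obstacle, is that the model is ill-conditioned: the dimension grows like $2n$ while the smallest curvature $m$ decays like $e^{-2\|\bs{\theta}^*\|_\infty}$, so $V$ is nearly singular when $\|\bs{\theta}^*\|_\infty$ diverges. A crude operator-norm bound on $V^{-1}$ would be far too lossy to close the argument; it is precisely the sharp, structure-exploiting $\ell_\infty$ estimate from the approximate inverse $S$ of Lemma \ref{lemma:inverse:appro} that keeps both the displacement and the Kantorovich quantity small enough. The genuine work therefore lies in showing, with probability tending to one, that the Newton iterates never leave the neighbourhood on which the bounds $m,M$ (and hence Lemma \ref{lemma:inverse:appro}) remain valid, and in bookkeeping the accumulated factors of $e^{2\|\bs{\theta}^*\|_\infty}$ so that the final exponent and the threshold $\tau<1/24$ are compatible.
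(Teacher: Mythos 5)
Your proposal follows essentially the same route as the paper: the paper's own proof (sketched before the theorem and carried out in the supplement) also runs Newton's iteration on the fixed-$\bs{\gamma}$ score system \eqref{eq:Fgamma} started at $\bs{\theta}^{(0)}=\bs{\theta}^*$, controls the residual $\|F_{\gamma}(\bs{\theta}^*)\|_\infty$ by concentration of the bi-degrees, and controls the inverse Jacobian via the structured approximate inverse $S$ of Lemma \ref{lemma:inverse:appro} with $m\asymp e^{-2\|\bs{\theta}^*\|_\infty}$ and $M=O(1)$ --- precisely the Yan--Leng--Zhu (2016) technique it cites --- while your uniqueness argument from strict concavity under the normalization $\beta_n=0$ matches as well. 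The only part you assert rather than execute is the Kantorovich bookkeeping that accumulates the factors of $e^{2\|\bs{\theta}^*\|_\infty}$ into the stated exponent $8$ and makes it compatible with $\tau<1/24$, but you correctly identify this as the technical core, and your outline is set up to deliver exactly that computation.
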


In order to prove the consistency of $\bs{\widehat{\gamma}}$, we define a profile likelihood
\begin{equation}\label{eq:concen-likelihood}
\ell^c(\bs{\gamma}, \bs{ \widehat{\theta} }(\bs{\gamma}))
= \sum_{i\neq j} a_{ij} Z_{ij}^\top \bs{\gamma} + \sum_{i=1}^n \alpha_i(\bs{\gamma}) d_i
+ \sum_{j=1}^n \beta_j(\bs{\gamma}) b_j + \sum_{i\neq j} \log( 1+ e^{Z_{ij}^\top \bs{\gamma}
+ \alpha_i(\bs{\gamma}) + \beta_j(\bs{\gamma}) } ),
\end{equation}
where $\bs{\widehat{\theta}} (\bs{\gamma})=\arg \max_{\bs{\theta}} \ell(\bs{\gamma}, \bs{\theta})$.
It is easy to show that
\begin{equation}\label{eq:exp:like}
\E[\ell(\bs{\gamma}, \bs{\alpha}, \bs{\beta} ) ]
= -\sum_{i\neq j} D_{KL}(p_{ij} \| p_{ij}(\bs{\gamma}, \alpha_i, \beta_j) ) - \sum_{i\neq j} S(p_{ij}),
\end{equation}
where \[D_{KL}(p_{ij}\| p_{ij}(\bs{\gamma}, \alpha_i, \beta_j))=\sum_{i,j} p_{ij} \log \frac{p_{ij} }{p_{ij}(\bs{\gamma}, \alpha_i, \beta_j)}
\]
is the Kullback-Leibler divergence of
$p_{ij}(\bs{\gamma}, \alpha_i, \beta_j)$ from $p_{ij}:=p_{ij}(\bs{\gamma}^*, \alpha_{i}^*, \beta_{j}^*)$
and $S(p)= -p \log p - (1 - p) \log (1 - p)$ is the binary entropy function.
Since the Kullback-Leibler distance is nonnegative, the function \eqref{eq:exp:like} attains its maximum value
when $\bs{\gamma}=\bs{\gamma}^*$, $\bs{\alpha}=\bs{\alpha}^*$ and $\bs{\beta}=\bs{\beta}^*$.
On the other hand, since $p_{ij}$ is a monotonic function on its arguments,
 $(\bs{\gamma}^*, \bs{\alpha}^*, \bs{\beta}^* )$ is a unique maximizer of the function
$ \E[\ell(\bs{\gamma}, \bs{\alpha}, \bs{\beta} ) ]$. The main idea of proving the consistency of $\bs{\widehat{\gamma}}$ is to show that
$n^{-2}| \ell(\bs{\gamma}, \bs{\alpha}, \bs{\beta} ) - \E[\ell(\bs{\gamma}, \bs{\alpha}, \bs{\beta} ) ]|$
is small in contrast with the magnitude of $n^{-2}\E[\ell(\bs{\gamma}, \bs{\alpha}, \bs{\beta} ) ]$, then the MLE approximately
attains at the maximum of the function $\E[\ell(\bs{\gamma}, \bs{\alpha}, \bs{\beta} ) ]$.
The consistency of $\bs{\widehat{\gamma}}$ is stated formally below, whose proof is given in Section \ref{subsection:63:theorem2}.

\begin{theorem}\label{theorem:con-beta}
Assume that $\bs{\gamma}^*\in \Gamma^0$ and $\|\bs{\theta}^*\|_\infty \le \tau \log n $, where $0<\tau<1/24$ is a constant,
and that $A \sim \P_{\bs{\gamma}^*, \bs{\theta}^*}$. Then as $n$ goes to infinity, we have
\[
\bs{\widehat{\gamma}} \stackrel{p}{\longrightarrow} \bs{\gamma}^*.
\]
\end{theorem}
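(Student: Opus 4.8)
The plan is to treat $\widehat{\bs{\gamma}}$ as an M-estimator maximizing the profile log-likelihood $\ell^c(\bs{\gamma}) := \ell(\bs{\gamma}, \widehat{\bs{\theta}}(\bs{\gamma}))$ from \eqref{eq:concen-likelihood}, and to run the classical ``argmax consistency $=$ uniform law of large numbers $+$ well-separated population maximum'' argument, carried out for a triangular array in which the nuisance parameter $\bs{\theta}$ has dimension $2n-1$ growing with $n$. Define the population criterion $\bar{\ell}(\bs{\gamma}) := \max_{\bs{\theta}} \E[\ell(\bs{\gamma}, \bs{\theta})]$. By identity \eqref{eq:exp:like}, maximizing $\E[\ell(\bs{\gamma}, \bs{\theta})]$ over $(\bs{\gamma}, \bs{\theta})$ is the same as minimizing $\sum_{i\neq j} D_{KL}(p_{ij} \| p_{ij}(\bs{\gamma}, \alpha_i, \beta_j))$, which is nonnegative and vanishes exactly at the truth; hence $\bar{\ell}$ is maximized at $\bs{\gamma} = \bs{\gamma}^*$ with $\bs{\theta}(\bs{\gamma}^*) = \bs{\theta}^*$.

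First I would record the basic inequality. Since $\widehat{\bs{\theta}}(\bs{\gamma})$ maximizes $\ell(\bs{\gamma}, \cdot)$ for every $\bs{\gamma}$ and $\widehat{\bs{\gamma}}$ maximizes $\ell^c$, we have $\ell(\widehat{\bs{\gamma}}, \widehat{\bs{\theta}}(\widehat{\bs{\gamma}})) \ge \ell(\bs{\gamma}^*, \widehat{\bs{\theta}}(\bs{\gamma}^*)) \ge \ell(\bs{\gamma}^*, \bs{\theta}^*)$. Decomposing the gap $\bar{\ell}(\bs{\gamma}^*) - \bar{\ell}(\widehat{\bs{\gamma}})$, inserting the empirical log-likelihood at the two points, using that the middle optimization difference is nonpositive, and invoking $\E[\ell(\widehat{\bs{\gamma}}, \widehat{\bs{\theta}}(\widehat{\bs{\gamma}}))] \le \bar{\ell}(\widehat{\bs{\gamma}})$ together with $\bar{\ell}(\bs{\gamma}^*) = \E[\ell(\bs{\gamma}^*, \bs{\theta}^*)]$, I obtain
\[
0 \le \frac{1}{n^2}\bigl(\bar{\ell}(\bs{\gamma}^*) - \bar{\ell}(\widehat{\bs{\gamma}})\bigr) \le \frac{2}{n^2}\sup_{\bs{\gamma}\in\Gamma,\,\bs{\theta}\in\Theta_n}\bigl|\ell(\bs{\gamma},\bs{\theta}) - \E[\ell(\bs{\gamma},\bs{\theta})]\bigr|,
\]
where $\Theta_n = \{\|\bs{\theta}\|_\infty \le c\log n\}$ is a truncated region chosen so that, by Theorem \ref{Theorem:binary:con}, both $\widehat{\bs{\theta}}(\widehat{\bs{\gamma}})$ and $\bs{\theta}^*$ lie in $\Theta_n$ with probability tending to one (any $c > \tau$ works).

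The crux is the \textbf{uniform law of large numbers} bounding the right-hand side. The only randomness enters $\ell$ linearly through the edges, so $\ell(\bs{\gamma},\bs{\theta}) - \E[\ell(\bs{\gamma},\bs{\theta})] = \sum_{i\neq j}(a_{ij} - p_{ij})(Z_{ij}^\top\bs{\gamma} + \alpha_i + \beta_j)$, a sum of $n(n-1)$ independent, mean-zero, bounded summands whose coefficients are $O(\log n)$ on $\Theta_n$ (using boundedness of $Z_{ij}$ and of $\Gamma$). Bernstein's inequality then gives, pointwise, a bound of the form $\P(n^{-2}|\ell - \E\ell| > \delta) \le 2\exp(-c'\delta^2 n^2/(\log n)^2)$. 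I would combine this with a discretization of $\Gamma \times \Theta_n$: its $\epsilon$-covering number has logarithm $O(n\log\log n)$, which is negligible against the exponent $n^2/(\log n)^2$, so a union bound over the net still tends to zero, and a Lipschitz estimate for $\ell$ (its gradient is $O(n^2)$ on $\Theta_n$) transfers control from the net to the whole region. This yields $n^{-2}\sup_{\Gamma\times\Theta_n}|\ell - \E\ell| \stackrel{p}{\longrightarrow} 0$, hence $n^{-2}(\bar{\ell}(\bs{\gamma}^*) - \bar{\ell}(\widehat{\bs{\gamma}})) \stackrel{p}{\longrightarrow} 0$.

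Finally I would close with a \textbf{well-separated maximum} bound: there is $\eta(\epsilon) > 0$, independent of $n$, with $n^{-2}(\bar{\ell}(\bs{\gamma}^*) - \bar{\ell}(\bs{\gamma})) \ge \eta(\epsilon)$ whenever $\|\bs{\gamma} - \bs{\gamma}^*\| \ge \epsilon$. This is where a non-degeneracy condition on the covariates is needed: profiling out the additively separable $\alpha_i + \beta_j$ absorbs only the part of $Z_{ij}^\top(\bs{\gamma} - \bs{\gamma}^*)$ of the form $u_i + w_j$, while the residual---controlled by the double differences $Z_{ij} - Z_{ik} - Z_{lj} + Z_{lk}$ having full column rank $p$---leaves an excess Kullback--Leibler divergence of order $n^2\|\bs{\gamma} - \bs{\gamma}^*\|^2$, giving the separation uniformly in $n$. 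Combining the two displays forces $\|\widehat{\bs{\gamma}} - \bs{\gamma}^*\| < \epsilon$ with probability approaching one, for every $\epsilon > 0$, which is the claim. I expect the \textbf{main obstacle} to be the uniform law of large numbers over the growing $(2n-1)$-dimensional nuisance space: one must check that the Bernstein concentration rate genuinely dominates the covering-number growth, and that the maximizer $\widehat{\bs{\theta}}(\bs{\gamma})$ can be confined to $\Theta_n$ uniformly over $\bs{\gamma}\in\Gamma$ rather than merely at $\bs{\gamma}^*$---precisely the subtlety that the bounded-parameter assumption in \cite{Graham:2017} sidesteps but the present $\log n$-growth regime must confront.
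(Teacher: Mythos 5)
Your skeleton (argmax consistency $=$ uniform deviation control $+$ separation of the population maximum) is the same as the paper's, and your uniform law of large numbers is sound: since $\ell-\E\ell=\sum_{i\neq j}(a_{ij}-p_{ij})(Z_{ij}^\top\bs{\gamma}+\alpha_i+\beta_j)$ is \emph{linear} in $(\bs{\gamma},\bs{\theta})$, your net-plus-Bernstein argument works (the exponent $n^2/(\log n)^2$ indeed dominates the $O(n\log\log n)$ metric entropy), though it is heavier than necessary --- the paper exploits the linearity directly, applying Hoeffding row-by-row to $\sum_{j\neq i}(a_{ij}-p_{ij})$ and to $\sum_{j\neq i}(a_{ij}-p_{ij})Z_{ij}^\top\bs{\gamma}$, multiplying by $\|\bs{\theta}\|_\infty\le n^\tau$, and obtaining the uniform bound $(\log n)^{1/2}/n^{1/2-\tau}$ over $\Gamma\times\{\|\bs{\theta}\|_\infty\le n^\tau\}$ with no covering net at all. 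The genuine gap is in your separation step: you claim an $\eta(\epsilon)>0$ \emph{independent of $n$}, with excess Kullback--Leibler of order $n^2\|\bs{\gamma}-\bs{\gamma}^*\|^2$ after profiling out the additive part. This is false in the regime the theorem actually covers, where $\|\bs{\theta}^*\|_\infty$ may grow like $\tau\log n$: the local curvature of the per-dyad KL divergence is weighted by $p_{ij}(1-p_{ij})$, which can be as small as order $e^{-2\|\bs{\theta}^*\|_\infty-\kappa}\asymp n^{-2\tau}$, so after the $n^{-2}$ normalization the separation is only of order $n^{-2\tau}\epsilon^2\to 0$, not a constant. Your constant-$\eta$ claim implicitly reinstates the bounded-parameter assumption of \cite{Graham:2017} that the theorem is designed to avoid. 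The slip is quantitative and repairable: since $\tau<1/24$ one has $2\tau<1/2-\tau$, so the shrinking separation $n^{-2\tau}\epsilon^2$ still dominates the deviation rate $(\log n)^{1/2}/n^{1/2-\tau}$, and your final comparison goes through with $\eta_n(\epsilon)$ in place of $\eta(\epsilon)$.

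It is instructive that the paper sidesteps exactly this issue by never asserting a fixed separation constant: it defines $\epsilon_n(\rho)$ as the gap between $\max_{\|\bs{\theta}\|_\infty\le n^\tau}\E[\ell(\bs{\gamma}^*,\bs{\theta})]$ and the corresponding maximum over $\bs{\gamma}\in B_n^c(\rho)$, chooses the radius $\rho_n$ \emph{implicitly} so that $\epsilon_n(\rho_n)$ just exceeds twice the uniform deviation bound, and then shows $\rho_n\to 0$ by monotonicity and continuity of $\epsilon_n(\rho)$ --- precisely the ``shrinking separation'' device your fixed-$\eta$ formulation would need to be replaced by. Two further remarks. First, your handling of the confinement of the nuisance maximizer is correct and in fact easier than you fear: the basic inequality only requires $\widehat{\bs{\theta}}(\widehat{\bs{\gamma}})$ (the joint MLE, confined by Theorem \ref{Theorem:binary:con}) and $\bs{\theta}^*$ to lie in $\Theta_n$, not $\widehat{\bs{\theta}}(\bs{\gamma})$ uniformly over $\bs{\gamma}$. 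Second, your explicit full-rank condition on the double differences $Z_{ij}-Z_{ik}-Z_{lj}+Z_{lk}$ is an identification assumption nowhere stated in the theorem; the paper instead asserts that $(\bs{\gamma}^*,\bs{\alpha}^*,\bs{\beta}^*)$ uniquely maximizes $\E[\ell]$ by monotonicity of $p_{ij}$, which tacitly presupposes that $Z_{ij}^\top c$ cannot be absorbed into an additive $u_i+w_j$ form --- so on this point your proposal is more honest than the paper, but it does import a hypothesis beyond the stated ones.
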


Next, we establish asymptotic normality of $\widehat{\bs{\theta}}$, whose proof is given in the supplementary mateiral.  
This is done by approximately representing
$\bs{\widehat{\theta}}$ as a function of $\mathbf{g}=(d_1, \ldots, d_n, b_1, \ldots, b_{n-1})^\top$ with an explicit expression.

\begin{theorem}\label{Theorem:binary:central}
Assume that $\bs{\gamma}^*\in \Gamma^0$ and $A\sim \P_{\bs{\gamma}^*, \bs{\theta}^*}$.  If $\|\bs{\theta}^*\|_\infty\le \tau\log n$, where $\tau \in (0, 1/44)$ is a constant, then for any fixed $k\ge 1$, as $n \to\infty$, the vector consisting of the first $k$ elements of $(\widehat{\bs{\theta}}-\bs{\theta}^*)$ is asymptotically multivariate normal with mean $\mathbf{0}$ and covariance matrix given by the upper left $k \times k$ block of $S$ defined in \eqref{definition:S}.
\end{theorem}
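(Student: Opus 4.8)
The plan is to derive an explicit linear representation of $\widehat{\bs{\theta}}-\bs{\theta}^*$ in terms of the centered bi-degree sequence $\mathbf{g}-\E[\mathbf{g}]$ and then apply a central limit theorem to its leading term, showing all remaining pieces are negligible in each of the first $k$ coordinates. Since the restricted MLE solves the score equations $F_{\widehat{\bs{\gamma}}}(\widehat{\bs{\theta}})=\mathbf{0}$, we have $\mathbf{g}=\E_{\widehat{\bs{\gamma}},\widehat{\bs{\theta}}}[\mathbf{g}]$, whereas $\E[\mathbf{g}]=\E_{\bs{\gamma}^*,\bs{\theta}^*}[\mathbf{g}]$. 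I would therefore write $\mathbf{g}-\E[\mathbf{g}]=\E_{\widehat{\bs{\gamma}},\widehat{\bs{\theta}}}[\mathbf{g}]-\E_{\bs{\gamma}^*,\bs{\theta}^*}[\mathbf{g}]$ and Taylor-expand the right-hand side about $(\bs{\gamma}^*,\bs{\theta}^*)$. The first-order term in $\bs{\theta}$ produces the Fisher information $V\in\mathcal{L}_n(m,M)$, the first-order term in $\bs{\gamma}$ produces the cross-information matrix $W=\partial\E[\mathbf{g}]/\partial\bs{\gamma}$, and the second-order terms are collected in a remainder $R$. Solving for $\widehat{\bs{\theta}}-\bs{\theta}^*$ and replacing $V^{-1}$ by the explicit approximation $S$ from Lemma \ref{lemma:inverse:appro} yields the working representation
\begin{equation*}
\widehat{\bs{\theta}}-\bs{\theta}^*
= S(\mathbf{g}-\E[\mathbf{g}])
+ (V^{-1}-S)(\mathbf{g}-\E[\mathbf{g}])
- V^{-1}W(\widehat{\bs{\gamma}}-\bs{\gamma}^*)
- V^{-1}R .
\end{equation*}

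The term $S(\mathbf{g}-\E[\mathbf{g}])$ determines the limit. For $1\le i\le k$, which all lie in the $\bs{\alpha}$-block once $n>k$, the explicit form of $S$ in \eqref{definition:S}, together with the identity $\sum_{j=1}^n d_j=\sum_{j=1}^n b_j$, gives the simplification
\begin{equation*}
[S(\mathbf{g}-\E[\mathbf{g}])]_i
= \frac{d_i-\E d_i}{v_{i,i}} + \frac{b_n-\E b_n}{v_{2n,2n}} .
\end{equation*}
Each such coordinate is a linear combination of the independent, bounded, centered Bernoulli variables $\{a_{lm}-p_{lm}\}$, so by the Cram\'er--Wold device any fixed linear combination of the first $k$ coordinates is again such a sum and is asymptotically normal by the Lindeberg--Feller theorem. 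The limiting covariance follows from $\operatorname{Var}(\mathbf{g})=V$ (as $\mathbf{g}-\E[\mathbf{g}]$ is the score for $\bs{\theta}$), giving $\operatorname{Var}(S(\mathbf{g}-\E[\mathbf{g}]))=SVS^\top$; writing $S=V^{-1}+(S-V^{-1})$ and invoking Lemma \ref{lemma:inverse:appro} shows $SVS^\top=S+o(S)$ entrywise, so the covariance of the first $k$ coordinates is, to leading order, the upper-left $k\times k$ block of $S$. Note that these coordinates are asymptotically correlated through the common term $(b_n-\E b_n)/v_{2n,2n}$, consistent with the off-diagonal entries $1/v_{2n,2n}$ of $S$.

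It then remains to verify that the three remaining terms are $o_p(n^{-1/2})$ coordinatewise. For the approximation error, the $i$-th coordinate of $(V^{-1}-S)(\mathbf{g}-\E[\mathbf{g}])$ has variance at most $\|V^{-1}-S\|^2\sum_{j,j'}|V_{j,j'}|$, which by Lemma \ref{lemma:inverse:appro} and $\sum_{j,j'}|V_{j,j'}|=O(n^2)$ is of smaller order than $n^{-1}$ under $\tau<1/44$. The covariate-coupling term is negligible because the rows of $V^{-1}W$ indexed by $\alpha_1,\dots,\alpha_k$ have $O(1)$ entries—after multiplication by $S$ the derivatives of the degree equations again cancel up to a single node, using $\sum_{j}d_j=\sum_j b_j$—while $\widehat{\bs{\gamma}}-\bs{\gamma}^*=O_p(n^{-1})$, since $\bs{\gamma}$ is fixed-dimensional and estimated from $n(n-1)$ dyads; hence $V^{-1}W(\widehat{\bs{\gamma}}-\bs{\gamma}^*)=O_p(n^{-1})$. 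The nonlinear remainder $V^{-1}R$ is treated using the explicit third-order derivatives of the log-likelihood weighted by $\|\widehat{\bs{\theta}}-\bs{\theta}^*\|_\infty^2$, inserting the uniform rate $\|\widehat{\bs{\theta}}-\bs{\theta}^*\|_\infty=O_p((\log n)^{1/2}e^{8\|\bs{\theta}^*\|_\infty}n^{-1/2})$ of Theorem \ref{Theorem:binary:con}.

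The main technical obstacle is the control of this nonlinear remainder. Its $i$-th coordinate is of order $s_{i,i}$ times a weighted quadratic form in $\widehat{\bs{\theta}}-\bs{\theta}^*$, and since $s_{i,i}\asymp v_{i,i}^{-1}\asymp e^{2\|\bs{\theta}^*\|_\infty}/n$ while the weight sum is of order $n$, it is of order $e^{c\|\bs{\theta}^*\|_\infty}\|\widehat{\bs{\theta}}-\bs{\theta}^*\|_\infty^2$ for an absolute constant $c$. Inserting the rate of Theorem \ref{Theorem:binary:con} produces a bound of order $n^{-1}(\log n)\,e^{c'\|\bs{\theta}^*\|_\infty}$, and forcing this to be $o(n^{-1/2})$ when $\|\bs{\theta}^*\|_\infty\le\tau\log n$ is precisely what dictates the sharpened restriction $\tau<1/44$, stronger than the $\tau<1/24$ needed for mere consistency. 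Conceptually, the delicate feature specific to the covariate model is showing that substituting $\widehat{\bs{\gamma}}$ for $\bs{\gamma}^*$ leaves the first-order behaviour of the finitely many fixed coordinates untouched; this decoupling, which rests on $\widehat{\bs{\gamma}}$ converging at the faster rate $n^{-1}$, is exactly what makes the limiting covariance coincide with the block of $S$ rather than an inflated version of it.
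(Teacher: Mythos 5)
Your proposal is correct and follows essentially the same route as the paper: the paper's own proof likewise represents $\widehat{\bs{\theta}}-\bs{\theta}^*$ explicitly as $S(\mathbf{g}-\E[\mathbf{g}])$ up to coordinatewise $o_p(n^{-1/2})$ errors (compare Lemma \ref{lemma:th4:c} and the $\|V^{-1}-S\|$ bound of Lemma \ref{lemma:inverse:appro}, the technique of \cite{Yan:Leng:Zhu:2016}), applies a CLT for the independent edge indicators to the leading term, handles the covariate coupling via the $O_p(n^{-1})$ rate of $\widehat{\bs{\gamma}}$ (available from the Theorem \ref{theorem:covariate:asym} machinery, which does not use Theorem \ref{Theorem:binary:central}, so there is no circularity), and controls the nonlinear remainder with the uniform rate of Theorem \ref{Theorem:binary:con}, which is exactly where $\tau<1/44$ enters. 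The only cosmetic caveat is that your $O(1)$ bound on the rows of $V^{-1}W$ is really $O(e^{2\|\bs{\theta}^*\|_\infty+\kappa})$, which remains negligible under the stated condition.
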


\begin{remark}
By Theorem \ref{Theorem:binary:central}, for any fixed $i$, as $n\rightarrow \infty$, the convergence rate of $\hat{\theta}_i$ is $1/v_{i,i}^{1/2}$,
whose magnitude is between $O(n^{-1/2}e^{\|\bs{\theta}^*\|_\infty})$ and $O(n^{-1/2})$ by inequality (6) in the supplementary material. 
\end{remark}

Now we provide the exact form of $V$, the Fisher information matrix of the vector parameter $\bs{\theta}$.
For $i=1,\ldots, n$,
\[
v_{i,l} = 0,~~l=1,\ldots, n, ~l\neq i;
~~v_{i,i} =   \sum_{k=1;k \neq i}^n \frac{e^{Z_{ij}^\top\bs{\gamma}+\alpha_i+\beta_k}}{(1+e^{Z_{ij}^\top\bs{\gamma}+\alpha_i+\beta_k})^2},
\]
\[
v_{i,n+j} = \frac{e^{Z_{ij}^\top\bs{\gamma}+\alpha_i+\beta_j}}{(1+e^{Z_{ij}^\top\bs{\gamma}+\alpha_i+\beta_j})^2}, ~~j=1,\ldots, n-1, ~j\neq i;
~~v_{i,n+i}=0
\]
and for $j=1,\ldots, n-1$,
\[
v_{n+j,i} = \frac{e^{Z_{ij}^\top\bs{\gamma}+\alpha_l+\beta_j}}{(1+e^{Z_{ij}^\top\bs{\gamma}+\alpha_l+\beta_j})^2}, ~~l=1,\ldots, n, ~l\neq j; ~~v_{n+j,j} =0,
\]
\[
v_{n+j,n+j} =  \sum_{k=1;k \neq j}^n \frac{e^{Z_{ij}^\top\bs{\gamma}+\alpha_k+\beta_j}}{(1+e^{Z_{ij}^\top\bs{\gamma}+\alpha_k+\beta_j})^2}, ~~v_{n+j, i} = 0, ~~i=1,\ldots, n-1.
\]
Let $H$ be the Hessian matrix of the log-likelihood function $\ell(\bs{\gamma}, \bs{\alpha}, \bs{\beta})$ in \eqref{eq:likelihood} which can be represented as
\[
H=\begin{pmatrix} H_{\gamma\gamma}  & H_{\gamma\theta} \\
H_{\gamma\theta}^\top & -V
\end{pmatrix}.
\]
Following \cite{Amemiya:1985} (p. 126), the Hessian matrix of $\ell^c (\bs{\gamma}^*,
\hat{\theta}(\bs{\gamma}^*))$ is $H_{\gamma\gamma} + H_{\gamma\theta} V^{-1} H_{\gamma\theta}^\top$.
To state the form of the limit distribution of $\hat{\bs{\gamma}}$, define
\begin{equation}\label{eq:I0:beta}
I_n(\bs{\gamma}^*) =  - \frac{1}{n(n-1)} \frac{ \partial^2 \ell^c (\bs{\gamma}^*,
\hat{\theta}(\bs{\gamma}^*)) }{ \partial \bs{\gamma} \partial \bs{\gamma}^\top }
=  \frac{1}{n(n-1)}(-H_{\gamma\gamma} - H_{\gamma\theta} V^{-1} H_{\gamma\theta}^\top),
\end{equation}
whose approximate expression is given in \eqref{eq:I:approximation}, and $I_*(\bs{\gamma})$ as the limit of  $I_n(\bs{\gamma}^*)$
as $n$ goes to infinity.

\begin{theorem}\label{theorem:covariate:asym}
Assume that $\bs{\gamma}^*\in \Gamma^0$ and $\bs{\theta}^*\in \R^{2n-1}$ with $\|\bs{\theta}^*\|_\infty \le \tau \log n $, where $0<\tau<1/24$ is a constant,
and that $A \sim \P_{\bs{\gamma}^*, \bs{\theta}^*}$. Then as $n$ goes to infinity, the $p$-dimensional vector
$N^{1/2}(\bs{\hat{\gamma}}-\bs{\gamma}^* )$ is asymptotically multivariate normal distribution
with mean $I_*^{-1} (\bs{\gamma}) B_*$ and covariance matrix $I_*^{-1}(\bs{\gamma})$,
where $N=n(n-1)$ and $B_*$ is the bias term given in \eqref{definition:Bstar}.
\end{theorem}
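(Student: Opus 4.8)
The plan is to base the argument on the profile log-likelihood $\ell^c(\bs{\gamma})=\ell(\bs{\gamma},\widehat{\bs{\theta}}(\bs{\gamma}))$ defined through \eqref{eq:concen-likelihood} and to linearise its score about $\bs{\gamma}^*$. Since $\widehat{\bs{\gamma}}$ maximises $\ell^c$ and is consistent by Theorem~\ref{theorem:con-beta}, a Taylor expansion of the profile score about $\bs{\gamma}^*$ yields
\[
\mathbf{0}=\left.\frac{\partial\ell^c}{\partial\bs{\gamma}}\right|_{\bs{\gamma}^*}+\left.\frac{\partial^2\ell^c}{\partial\bs{\gamma}\,\partial\bs{\gamma}^\top}\right|_{\bar{\bs{\gamma}}}(\widehat{\bs{\gamma}}-\bs{\gamma}^*),
\]
with $\bar{\bs{\gamma}}$ an intermediate value, so that $N^{1/2}(\widehat{\bs{\gamma}}-\bs{\gamma}^*)=[-N^{-1}\partial^2\ell^c/\partial\bs{\gamma}\,\partial\bs{\gamma}^\top]^{-1}\,N^{-1/2}(\partial\ell^c/\partial\bs{\gamma})|_{\bs{\gamma}^*}$. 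Using \eqref{eq:I0:beta}, Theorems~\ref{Theorem:binary:con}--\ref{theorem:con-beta} and a uniform law of large numbers, I first show that the normalised profile Hessian converges in probability to $I_*(\bs{\gamma})$; evaluating it at $\bar{\bs{\gamma}}$ and at $\widehat{\bs{\theta}}(\bar{\bs{\gamma}})$ rather than at the truth is asymptotically harmless. The whole problem therefore reduces to a central limit theorem, carrying a non-vanishing bias, for $N^{-1/2}(\partial\ell^c/\partial\bs{\gamma})|_{\bs{\gamma}^*}$.

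The core is a second-order expansion of the profile score at $\bs{\gamma}^*$. Because $\widehat{\bs{\theta}}(\bs{\gamma}^*)$ solves the $\bs{\theta}$-score equations, the envelope identity $(\partial\ell^c/\partial\bs{\gamma})|_{\bs{\gamma}^*}=(\partial\ell/\partial\bs{\gamma})|_{(\bs{\gamma}^*,\widehat{\bs{\theta}}(\bs{\gamma}^*))}$ holds, and I expand the right-hand side in powers of $\bs{\delta}:=\widehat{\bs{\theta}}(\bs{\gamma}^*)-\bs{\theta}^*$. Writing $s_\gamma=(\partial\ell/\partial\bs{\gamma})|_*$ and $s_\theta=(\partial\ell/\partial\bs{\theta})|_*$ for the full scores at the truth (each a sum of independent mean-zero dyadic terms $a_{ij}-p_{ij}$) and inserting the linear representation $\bs{\delta}=V^{-1}s_\theta+(\text{remainder})$ from the proof of Theorem~\ref{Theorem:binary:central}, the expansion splits as
\[
\left.\frac{\partial\ell^c}{\partial\bs{\gamma}}\right|_{\bs{\gamma}^*}=\big(s_\gamma+H_{\gamma\theta}V^{-1}s_\theta\big)+\frac{1}{2}\sum_{s,t}\left.\frac{\partial^3\ell}{\partial\bs{\gamma}\,\partial\theta_s\,\partial\theta_t}\right|_*\delta_s\delta_t+R_n,
\]
with $R_n$ collecting cubic and higher terms. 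Since the Hessian of this logistic likelihood is non-random, the first bracket is a fixed linear combination of $(s_\gamma,s_\theta)$; a block computation using the full Fisher information $\begin{pmatrix}-H_{\gamma\gamma}&-H_{\gamma\theta}\\-H_{\gamma\theta}^\top&V\end{pmatrix}$ shows its variance equals exactly $-H_{\gamma\gamma}-H_{\gamma\theta}V^{-1}H_{\gamma\theta}^\top=N I_n(\bs{\gamma}^*)$, so the Lindeberg--Feller CLT for the $N$ independent summands gives $N^{-1/2}(s_\gamma+H_{\gamma\theta}V^{-1}s_\theta)\stackrel{d}{\longrightarrow}N(\mathbf{0},I_*(\bs{\gamma}))$.

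The bias is produced by the quadratic term. Substituting $\bs{\delta}\approx V^{-1}s_\theta$ and replacing the random quadratic form by its mean, its leading part is $\tfrac12\sum_{s,t}(\partial^3\ell/\partial\bs{\gamma}\,\partial\theta_s\,\partial\theta_t)|_*(V^{-1})_{st}$; the mixed third derivatives of $\ell$ vanish unless $\theta_s,\theta_t$ share a dyad, so this is a sum of order-one contributions over $O(n)$ effective index pairs and, after division by $N^{1/2}$ (of order $n$), converges to the finite limit $B_*$ recorded in \eqref{definition:Bstar}. Here the explicit approximation $V^{-1}\approx S$ of Lemma~\ref{lemma:inverse:appro}, with $S$ from \eqref{definition:S}, is exactly what makes the entries $(V^{-1})_{st}$ tractable and allows $B_*$ to be identified in closed form. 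Combining the two pieces yields $N^{-1/2}(\partial\ell^c/\partial\bs{\gamma})|_{\bs{\gamma}^*}\stackrel{d}{\longrightarrow}N(B_*,I_*(\bs{\gamma}))$, and Slutsky's theorem together with the Hessian limit delivers the claimed distribution $N(I_*^{-1}(\bs{\gamma})B_*,I_*^{-1}(\bs{\gamma}))$.

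The step I expect to be hardest is the uniform control of the remainders over the $2n-1$ growing nuisance coordinates. I must show both that $R_n=o_p(N^{1/2})$ and that the discrepancy in $\bs{\delta}=V^{-1}s_\theta+(\text{remainder})$, once passed through $H_{\gamma\theta}$ and through the quadratic form, is negligible relative to $N^{1/2}$. Both bounds rest on the uniform rate $\|\widehat{\bs{\theta}}(\bs{\gamma}^*)-\bs{\theta}^*\|_\infty=O_p(n^{-1/2}(\log n)^{1/2}e^{8\|\bs{\theta}^*\|_\infty})$ from Theorem~\ref{Theorem:binary:con} and on the entrywise bounds for $V^{-1}$ supplied by Lemma~\ref{lemma:inverse:appro}; the condition $\tau<1/24$ is precisely what keeps the factor $e^{8\|\bs{\theta}^*\|_\infty}\le n^{8\tau}$ small enough for these terms to be absorbed into the error.
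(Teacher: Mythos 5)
Your proposal is correct and follows essentially the same route as the paper: the paper likewise Taylor-expands the (profile) score $\sum_{i\neq j} s_{\gamma_{ij}}(\bs{\gamma}, \widehat{\bs{\theta}}(\bs{\gamma}))$ about $\bs{\gamma}^*$, decomposes $N^{-1/2}\sum_{i\neq j} s_{\gamma_{ij}}(\bs{\gamma}^*,\widehat{\bs{\theta}}(\bs{\gamma}^*))$ into a linear term (which via the representation $\widehat{\theta}^*_i-\theta^*_i=[S\{\mathbf{g}-\E(\mathbf{g})\}]_i+o_p(n^{-1/2})$ becomes the effective score $s_{\gamma_{ij}}-H_{\gamma\theta}H_{\theta\theta}^{-1}s_{\theta_{ij}}$, handled by a CLT for independent bounded summands with variance $NI_n(\bs{\gamma}^*)$), a quadratic term yielding the bias $B_*$ through the approximation $V^{-1}\approx S$ of Lemma~\ref{lemma:inverse:appro}, and a cubic remainder killed by the uniform rate $\|\widehat{\bs{\theta}}^*-\bs{\theta}^*\|_\infty=O_p(n^{-1/2}(\log n)^{1/2}e^{8\|\bs{\theta}^*\|_\infty})$ under the stated bound on $\tau$. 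Your identification of the variance of the effective score, the source of the bias, and the role of the $e^{8\|\bs{\theta}^*\|_\infty}$ factor all match the paper's argument.
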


\begin{remark}
The limiting distribution of $\bs{\widehat{\gamma}}$ is involved with a bias term
\[
\mu_*=\frac{ I_*^{-1}(\bs{\gamma}) B_* }{ \sqrt{n(n-1)}}.
\]
If all parameters $\bs{\gamma}$ and $\bs{\theta}$ are bounded, then
$\mu_*=O( n^{-1/2})$. It follows that $B_*=O(1)$ and $(I_*)_{i,j}=O(1)$ according to their expressions.
Since the MLE $\bs{\widehat{\gamma}}$ is not centered at the true parameter value, the confidence intervals
and the p-values of hypothesis testing constructed from $\bs{\widehat{\gamma}}$ cannot achieve the nominal level without bias-correction under the null: $\bs{\gamma}^*=0$.
This is referred to as the so-called incidental parameter problem in econometric literature [\cite{Neyman:Scott:1948, FVW2016, Dzemski:2017}].
The produced bias is due to the appearance of additional parameters.
Here, we propose to use the analytical bias correction formula:
$\bs{\widehat{\gamma}}_{bc} = \bs{\widehat{\gamma}}- \hat{I}^{-1} \hat{B}/\sqrt{n(n-1)}$,
where $\hat{I}$ and $\hat{B}$ are the estimates of $I_*$ and $B_*$ by replacing
$\bs{\gamma}$ and $\bs{\theta}$ in their expressions with their MLEs $\bs{\widehat{\gamma}}$ and
$\bs{\widehat{\theta}}$, respectively.
\cite{Dzemski:2014} also used this bias correction procedure, but his expression depends on projected values of pair-wise covariates into
the space spanned by degree parameters $\alpha_i$ and $\beta_j$ under a weighted least square problem and is not explicit.
In the simulation in next section, we can see that the correction formula offer dramatically improvements over uncorrected estimates
and exhibit the corrected coverage probabilities, in which those for uncorrected estimates are below the nominal level evidently.
{See also \cite{Hahn:Newey:2004} and \cite{FVW2016} for Jackknife bias correction for nonlinear panel models. But
as discussed in \cite{Dzemski:2014}, this method is difficult to implement for network models.
Moreover,  \cite{Graham:2017} described an iterated bias
correction procedure,} {which may be numerically unstable and is not guaranteed to converge as demonstrated in \cite{Juodis:2013}.}
\end{remark}

\begin{remark}
There are three main differences between the results in \cite{FVW2016} and those in our paper.
First, for proving their asymptotic results, \cite{FVW2016} used a projection method by projecting the pairwise covariates into the space spanned by degree parameters $\alpha_i$ and $\beta_j$ as a weighted least squares problem,
while we use an elementary method by approximating the inverse matrix of the Fisher information of the degree parameters via an analytical expression.
As a result, the asymptotic variances of the estimators in \cite{FVW2016} depend on
projected values not having closed form expressions, while ours are explicit and easier to compute.
We also note that the matrix to approximate the inverse of the incidental parameter Hessian in  \cite{FVW2016} is diagonal while
ours is not. Second, the asymptotic distribution of the MLE of the incident parameters in $\alpha_i$ and $\beta_j$ is not addressed in \cite{FVW2016}.
Note that the properties of the incidental parameter estimators are more challenging than the fixed dimensional parameter $\gamma$ due to their increasing dimensions.
Third, \cite{FVW2016} assumed that all parameters are bounded while we consider an asymptotic setting to
allow the upper bound of the degree parameter to increase as the size of a network grows.

\end{remark}

\section{Numerical Studies}
\label{section:simulation}
In this section, we evaluate the asymptotic results of the MLEs for model \eqref{Eq:density:whole} through simulation studies and a real data example.

\subsection{Simulation studies}

Similar to \cite{Yan:Leng:Zhu:2016}, the parameter values take a linear form. Specifically,
we set $\alpha_{i+1}^* = (n-1-i)L/(n-1)$ for $i=0, \ldots, n-1$ and let $\beta_i^*=\alpha_i^*$, $i=1, \ldots, n-1$ for simplicity. By default, $\beta_n^*=0$.
We considered four different values for $L$ as $L\in \{0 , \log(\log n), (\log n)^{1/2}, \log n \}$. By allowing the true value of $\bs{\alpha}$ and $\bs{\beta}$ to grow with $n$, we intended to assess the asymptotic properties under different asymptotic regimes.
Similar to \cite{Graham:2017} and \cite{Dzemski:2014},
each element of the $p$-dimensional node-specific covariate $X_i$ is independently generated from a $Beta(2,2)$ distribution.
The difference is that their papers considered $p=1$ while in this paper we set $p=2$ by letting $Z_{ij}=(|X_{i1}-X_{j1}|, |X_{i2}-X_{j2}|)^\top$.
For the parameter $\bs{\gamma}^*$, we let it be $(1, 1.5)^\top$. Thus, the homophily effect of the network is determined by a weighted sum of the similarity measures of the two covariates between two nodes.

Note that by Theorems \ref{Theorem:binary:central}, $\hat{\xi}_{i,j} = [\hat{\alpha}_i-\hat{\alpha}_j-(\alpha_i^*-\alpha_j^*)]/(1/\hat{v}_{i,i}+1/\hat{v}_{j,j})^{1/2}$, $\hat{\zeta}_{i,j} = (\hat{\alpha}_i+\hat{\beta}_j-\alpha_i^*-\beta_j^*)/(1/\hat{v}_{i,i}+1/\hat{v}_{n+j,n+j})^{1/2}$, and $\hat{\eta}_{i,j} = [\hat{\beta}_i-\hat{\beta}_j-(\beta_i^*-\beta_j^*)]/(1/\hat{v}_{n+i,n+i}+1/\hat{v}_{n+j,n+j})^{1/2}$
are all asymptotically distributed as standard normal random variables, where $\hat{v}_{i,i}$ is the estimate of $v_{i,i}$
by replacing $(\bs{\gamma}^*, \bs{\theta}^*)$ with $(\bs{\widehat{\gamma}}, \bs{\widehat{\theta}})$.
Therefore, we assess the asymptotic normality of $\hat{\xi}_{i,j}$, $\hat{\zeta}_{i,j}$ and $\hat{\eta}_{i,j}$ using the quantile-quantile (QQ) plot.  Further, we also record the coverage probability of the 95\% confidence interval, the length of the confidence interval, and the frequency that the MLE does not exist.  The results for $\hat{\xi}_{i,j}$, $\hat{\zeta}_{i,j}$ and $\hat{\eta}_{i,j}$ are similar, thus only the results of $\hat{\xi}_{i,j}$ are reported.
The average and median values of $\bs{\widehat{\gamma}}$ are also reported. Finally,
each simulation is repeated $10,000$ times.

We simulated networks with $n=100$ or $n=200$ and found that the QQ-plots for these two network sizes were similar.
Therefore, we only show the QQ-plots for $n=200$ in Figure \ref{figure-qq} to save space. In this figure,
the horizontal and vertical axes are the theoretical and empirical quantiles, respectively,
and the straight lines correspond to the reference line $y=x$.
In Figure \ref{figure-qq},
when $L=0$ and $\log (\log n)$, the empirical quantiles coincide well with the theoretical ones,
while there are notable deviations when $L=(\log n)^{1/2}$.
When $L=\log n$, the MLE did not exist in all repetitions (see Table \ref{Table:alpha},
thus the corresponding QQ plot could not be shown).

\begin{figure}[!htb]
\centering
\includegraphics[ width=6in, angle=0]{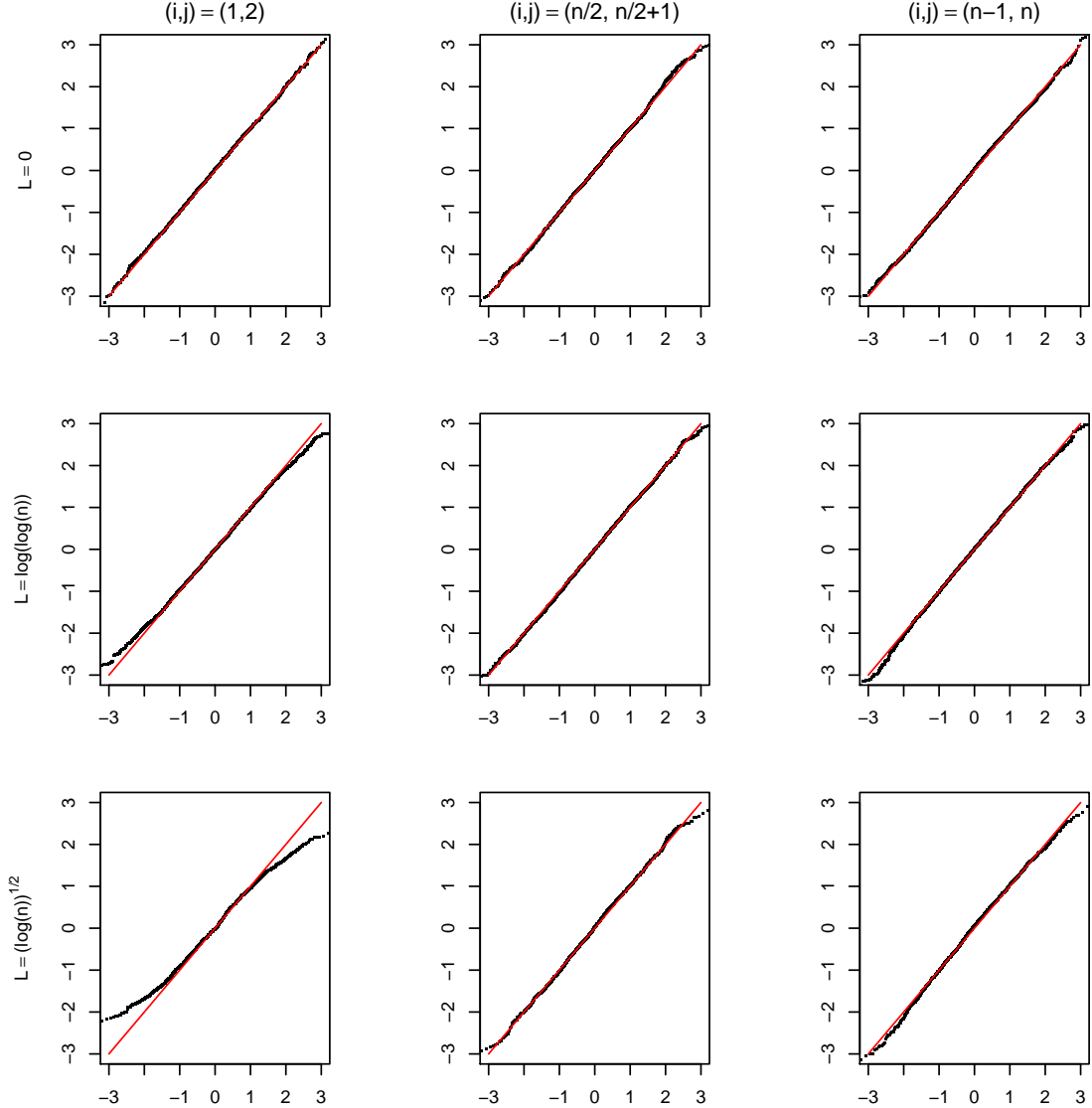}
\caption{The QQ plots of $\hat{v}_{ii}^{1/2}(\hat{\theta}_i-\theta_i)$. }
\label{figure-qq}
\end{figure}

Table \ref{Table:alpha} reports the coverage probability of the 95\% confidence interval for $\alpha_i - \alpha_j$, the length of the confidence interval as well as the frequency that the MLE did not exist.
As we can see, the length of the confidence interval increases as $L$ increases and decreases as $n$ increases, which qualitatively agrees with the theory.  The coverage frequencies are all close to the nominal level when $L=0$ or $\log(\log n)$, while when $L=(\log n)^{1/2}$, the MLE often does not exist and the coverage frequencies for pair $(1, 2)$ are higher than the nominal level; when $L$ is $\log n$, the MLE did not exist for all repetitions.

{\renewcommand{\arraystretch}{1}
\begin{table}[!h]\centering
\caption{The reported values are the coverage frequency ($\times 100\%$) for $\alpha_i-\alpha_j$ for a pair $(i,j)$ / the length of the confidence interval / the frequency ($\times 100\%$) that the MLE did not exist.}
\label{Table:alpha}
\begin{tabular}{ccccccc}
\hline
n       &  $(i,j)$ & $L=0$ & $L=\log(\log n)$ & $L=(\log n)^{1/2}$ & $L=\log n$ \\
\hline
100         &$(1,2)   $&$  94.82/1.20/0$ & $97.02/2.62/0$ & $99.80/3.80/90.04  $ &$ NA/NA/100$ \\
            &$(50,51) $&$  94.76/1.20/0$ & $95.79/1.86/0$ & $96.98/2.37/90.04  $ &$ NA/NA/100$ \\
            &$(99,100)$&$  94.84/1.20/0$ & $95.21/1.44/0$ & $96.38/1.57/90.04  $ &$ NA/NA/100$ \\
&&&&&&\\
200         &$(1,2)     $&$ 95.18/0.84/0 $ & $96.31/1.96/0$ & $98.64/3.05/45.08$ &$ NA/NA/100$ \\
            &$(100,101) $&$ 94.33/0.84/0 $ & $94.88/1.36/0$ & $94.99/1.72/45.08$ &$ NA/NA/100$ \\
            &$(199,200) $&$ 95.08/0.84/0 $ & $94.78/1.02/0$ & $94.95/1.12/45.08$ &$ NA/NA/100$ \\
\hline
\end{tabular}
\end{table}
}

Table \ref{Table:gamma} reports the coverage probabilities for the estimate $\bs{\widehat{\gamma}}$
and bias correction estimate  $\bs{\widehat{\gamma}}_{bc} (= \bs{\widehat{\gamma}}- \hat{I}^{-1} \hat{B}/\sqrt{n(n-1)})$ at the nominal level $95\%$,
the average absolute bias as well as the standard error.
As we can see, the coverage frequencies for the uncorrected estimate is visibly below the nominal level with at least $10$ percentage points
and the bias correction estimate dramatically improve the coverage frequencies, whose coverage frequencies are close to the nominal level when the MLE exists with a high frequency.
On the other hand, when $n$ is fixed, the average absolute bias of $\bs{\widehat{\gamma}}$ increases as $L$ becomes larger and so is the standard error.

{\renewcommand{\arraystretch}{1}
\begin{table}[!htbp]\centering
\caption{
The reported values are the coverage frequency ($\times 100\%$) for $\gamma_i$ for $i$ / average bias / length of confidence interval
 /the frequency ($\times 100\%$) that the MLE did not exist ($\bs{\gamma}^*=(1, 1.5)^\top$).
}
\label{Table:gamma}
\begin{tabular}{cclllcc}
\hline
$n$     &   $\bs{\widehat{\gamma}}$  & $L=0$ & $L=\log(\log n)$ & $L=(\log n)^{1/2}$ & $L=\log n$ \\
\hline
$100$   & $\hat{\gamma}_1$            &$  80.78/0.18/0.56/0  $&$ 5.80/0.81/0.84/0 $&$  0.20/1.28/1.02/90.04$ &  NA \\

        & $\hat{\gamma}_{bc, 1}$      &$  94.28/0.18/0.56/0  $&$ 94.56/0.81/0.84/0 $&$  94.76/1.28/1.02/90.04$ &  NA \\

        & $\hat{\gamma}_2$            &$  81.14/0.19/0.57/0  $&$ 7.31/0.80/0.85/0 $&$  1.41/1.26/1.04/90.04$ &  NA \\

        & $\hat{\gamma}_{bc, 2}$      &$  94.14/0.19/0.57/0  $&$ 94.56/0.80/0.85/0 $&$  93.76/1.26/1.04/90.04$ &  NA\\

$200$   & $\hat{\gamma}_1$            &$  81.23/0.04/0.28/0  $&$ 3.69/0.22/0.43/0  $&$  0.34/0.34/0.52/45.08$  &  NA \\
        & $\hat{\gamma}_{bc, 1}$      &$  95.22/0.04/0.28/0  $&$ 94.37/0.22/0.43/0  $&$  96.19/0.34/0.52/45.08$  &  NA\\

        & $\hat{\gamma}_2$            &$  81.05/0.05/0.28/0  $&$ 4.14/0.22/0.44/0  $&$  0.69/0.34/0.52/45.08$   &  NA \\
        & $\hat{\gamma}_{bc, 2}$      &$  94.38/0.05/0.28/0  $&$ 94.75/0.22/0.44/0  $&$  95.33/0.34/0.53/45.08$   & NA\\

\hline
\end{tabular}
\end{table}
}

\subsection{Two data examples}
{\it The analysis of a Lazega's dataset}.
We first analyze  Lazega's datasets of lawyers \citep{Lazega:2001}, downloaded from \url{https://www.stats.ox.ac.uk/~snijders/siena/Lazega_lawyers_data.htm}.  This data set comes from a network study of corporate law partnership that was carried out in a Northeastern US corporate law firm between 1988 and 1991 in New England. We focus on the friendship network among the $71$ attorneys including partners and associates of this firm. These attorneys were asked to name  attorneys whom they socialized with outside work. Naturally for a network of this sort, many covariates of each attorney were collected. In particular, the collected covariates at the node level include
 formal status (partner or associate); gender (man or woman), location in which they worked (Boston, Hartford, or Providence),
years with the firm, age, practice (litigation or corporate) and law school attended (harvard and yale, or ucon, or others). We define the covariate for each dyad as a $7$ dimensional vector consisting of the differences between these $7$ variables of the two individuals, where for categorical variables, the difference is defined as the indicator whether they are equal, and for continuous variable, the difference indicates their absolute distance.
The directed graph of this data set is shown in Figure \ref{figure-data} where colors indicate either different status in (a) or different practice in (b).
 Although it may deem appropriate to treat the friendship relationship as undirected, from Figure \ref{figure-data},  we can see that the numbers of outgoing and incoming connections for many individuals are dramatically different. As a result, we model the friendship network as a directed one.

{\renewcommand{\arraystretch}{1}
\begin{table}[!hbt]\centering
\caption{The estimators of $\alpha_i$ and $\beta_j$ and their standard errors in the Lazega's data set.}
\label{Table:alphabeta:real}
\begin{tabular}{ccc ccc ccc ccc ccc}
\hline
Vertex & $d_i$  &  $\hat{\alpha}_i$ & $\hat{\sigma}_i$ & $b_j$ &  $\hat{\beta}_i$ & $\hat{\sigma}_j$  && Vertex & $d_i$  &  $\hat{\alpha}_i$ & $\hat{\sigma}_i$ & $b_j$ &  $\hat{\beta}_i$ & $\hat{\sigma}_j$ \\
\hline
1 &$ 4 $&$ -6.21 $&$ 0.63 $&$ 5 $&$ 0.53 $&$ 0.60 $&& 34 &$ 6 $&$ -5.54 $&$ 0.47 $&$ 11 $&$ 1.18 $&$ 0.38 $ \\
2 &$ 4 $&$ -6.01 $&$ 0.67 $&$ 9 $&$ 1.91 $&$ 0.51 $&& 35 &$ 9 $&$ -4.25 $&$ 0.47 $&$ 10 $&$ 1.55 $&$ 0.49 $ \\
4 &$ 14 $&$ -3.46 $&$ 0.44 $&$ 14 $&$ 2.79 $&$ 0.41 $&& 36 &$ 9 $&$ -5.4 $&$ 0.4 $&$ 11 $&$ 0.77 $&$ 0.37 $ \\
5 &$ 3 $&$ -5.01 $&$ 0.64 $&$ 5 $&$ 1.43 $&$ 0.56 $&& 38 &$ 8 $&$ -5.21 $&$ 0.43 $&$ 13 $&$ 1.42 $&$ 0.37 $ \\
7 &$ 1 $&$ -6.59 $&$ 1.06 $&$ 2 $&$ -0.04 $&$ 0.77 $&& 39 &$ 8 $&$ -5.47 $&$ 0.43 $&$ 13 $&$ 1.14 $&$ 0.37 $ \\
8 &$ 1 $&$ -8.32 $&$ 1.06 $&$ 7 $&$ 0.56 $&$ 0.53 $&& 40 &$ 10 $&$ -5.29 $&$ 0.39 $&$ 8 $&$ 0.21 $&$ 0.43 $ \\
9 &$ 6 $&$ -5.98 $&$ 0.55 $&$ 14 $&$ 2.1 $&$ 0.41 $&& 41 &$ 12 $&$ -5.04 $&$ 0.37 $&$ 17 $&$ 1.42 $&$ 0.35 $ \\
10 &$ 14 $&$ -4.17 $&$ 0.44 $&$ 4 $&$ -0.45 $&$ 0.70 $&& 42 &$ 14 $&$ -4.55 $&$ 0.35 $&$ 9 $&$ 0.54 $&$ 0.41 $ \\
11 &$ 5 $&$ -6.49 $&$ 0.56 $&$ 14 $&$ 1.7 $&$ 0.41 $&& 43 &$ 15 $&$ -4.4 $&$ 0.35 $&$ 13 $&$ 1.21 $&$ 0.37 $ \\
12 &$ 22 $&$ -2.95 $&$ 0.38 $&$ 8 $&$ 0.86 $&$ 0.49 $&& 45 &$ 6 $&$ -5.8 $&$ 0.46 $&$ 4 $&$ -0.63 $&$ 0.56 $ \\
13 &$ 14 $&$ -4.35 $&$ 0.42 $&$ 19 $&$ 2.56 $&$ 0.36 $&& 46 &$ 3 $&$ -5.61 $&$ 0.66 $&$ 5 $&$ 0.53 $&$ 0.56 $ \\
14 &$ 6 $&$ -4.27 $&$ 0.51 $&$ 6 $&$ 1.21 $&$ 0.54 $&& 48 &$ 7 $&$ -5.4 $&$ 0.44 $&$ 4 $&$ -0.39 $&$ 0.57 $ \\
15 &$ 3 $&$ -4.89 $&$ 0.64 $&$ 2 $&$ 0.39 $&$ 0.79 $&& 49 &$ 4 $&$ -6.7 $&$ 0.55 $&$ 6 $&$ -0.42 $&$ 0.48 $ \\
16 &$ 8 $&$ -5.66 $&$ 0.48 $&$ 10 $&$ 0.94 $&$ 0.44 $&& 50 &$ 8 $&$ -4.34 $&$ 0.47 $&$ 8 $&$ 1.15 $&$ 0.48 $ \\
17 &$ 23 $&$ -2.85 $&$ 0.37 $&$ 18 $&$ 2.5 $&$ 0.37 $&& 51 &$ 6 $&$ -4.67 $&$ 0.51 $&$ 7 $&$ 1.11 $&$ 0.51 $ \\
18 &$ 8 $&$ -4.62 $&$ 0.46 $&$ 5 $&$ 0.33 $&$ 0.58 $&& 52 &$ 11 $&$ -5.1 $&$ 0.38 $&$ 14 $&$ 1.12 $&$ 0.37 $ \\
19 &$ 4 $&$ -6.85 $&$ 0.59 $&$ 4 $&$ -0.77 $&$ 0.63 $&& 54 &$ 7 $&$ -5.78 $&$ 0.45 $&$ 11 $&$ 0.68 $&$ 0.40 $ \\
20 &$ 12 $&$ -5.01 $&$ 0.43 $&$ 7 $&$ 0.2 $&$ 0.49 $&& 56 &$ 7 $&$ -5.91 $&$ 0.44 $&$ 10 $&$ 0.39 $&$ 0.40 $ \\
21 &$ 8 $&$ -5.73 $&$ 0.46 $&$ 15 $&$ 1.47 $&$ 0.37 $&& 57 &$ 9 $&$ -5.42 $&$ 0.41 $&$ 12 $&$ 0.87 $&$ 0.38 $ \\
22 &$ 8 $&$ -5.67 $&$ 0.44 $&$ 6 $&$ -0.1 $&$ 0.48 $&& 58 &$ 13 $&$ -3.6 $&$ 0.39 $&$ 12 $&$ 1.83 $&$ 0.42 $ \\
23 &$ 1 $&$ -8.65 $&$ 1.05 $&$ 7 $&$ -0.01 $&$ 0.48 $&& 59 &$ 5 $&$ -5.04 $&$ 0.57 $&$ 4 $&$ 0.12 $&$ 0.64 $ \\
24 &$ 23 $&$ -3.59 $&$ 0.34 $&$ 17 $&$ 1.68 $&$ 0.35 $&& 60 &$ 4 $&$ -6.2 $&$ 0.56 $&$ 8 $&$ 0.47 $&$ 0.44 $ \\
25 &$ 11 $&$ -3.95 $&$ 0.41 $&$ 10 $&$ 1.6 $&$ 0.46 $&& 61 &$ 3 $&$ -6.57 $&$ 0.63 $&$ 3 $&$ -0.88 $&$ 0.64 $ \\
26 &$ 9 $&$ -5.45 $&$ 0.43 $&$ 22 $&$ 2.24 $&$ 0.33 $&& 62 &$ 4 $&$ -6.32 $&$ 0.55 $&$ 5 $&$ -0.38 $&$ 0.52 $ \\
27 &$ 13 $&$ -4.54 $&$ 0.38 $&$ 17 $&$ 2.02 $&$ 0.35 $&& 64 &$ 19 $&$ -3.71 $&$ 0.33 $&$ 14 $&$ 1.55 $&$ 0.35 $ \\
28 &$ 11 $&$ -3.91 $&$ 0.42 $&$ 9 $&$ 1.32 $&$ 0.49 $&& 65 &$ 22 $&$ -3.68 $&$ 0.33 $&$ 8 $&$ 0.32 $&$ 0.43 $ \\
29 &$ 10 $&$ -4.81 $&$ 0.39 $&$ 10 $&$ 1.09 $&$ 0.39 $&& 66 &$ 15 $&$ -4.56 $&$ 0.35 $&$ 3 $&$ -0.97 $&$ 0.63 $ \\
30 &$ 6 $&$ -5.26 $&$ 0.53 $&$ 5 $&$ -0.1 $&$ 0.61 $&& 67 &$ 4 $&$ -6.5 $&$ 0.55 $&$ 3 $&$ -1.04 $&$ 0.63 $ \\
31 &$ 25 $&$ -2.21 $&$ 0.33 $&$ 14 $&$ 2.21 $&$ 0.42 $&& 68 &$ 6 $&$ -5.81 $&$ 0.48 $&$ 5 $&$ -0.32 $&$ 0.53 $ \\
32 &$ 4 $&$ -5.86 $&$ 0.63 $&$ 7 $&$ 0.54 $&$ 0.56 $&& 69 &$ 5 $&$ -6.13 $&$ 0.5 $&$ 4 $&$ -0.64 $&$ 0.56 $ \\
33 &$ 12 $&$ -4.03 $&$ 0.42 $&$ 2 $&$ -1.55 $&$ 0.89 $&& 70 &$ 7 $&$ -5.5 $&$ 0.44 $&$ 5 $&$ -0.25 $&$ 0.52 $ \\
34 &$ 6 $&$ -5.54 $&$ 0.47 $&$ 11 $&$ 1.18 $&$ 0.38$\\
\hline
\end{tabular}

\end{table}

In the data set,  individuals are labelled from $1$ to $71$. After removing those individuals whose in-degrees or out-degrees are zeros, we perform the analysis on the $63$ vertices left.
The minimum, $1/4$ quantile, $3/4$ quantile and maximum values of $\mathbf{d}$ are $1$,  $5$,  $8$,  $12$ and $25$, respectively;
those of $\mathbf{b}$ are $2$,  $5$,  $8$, $13$ and $22$, respectively.

The estimators of $\alpha_i$ and $\beta_i$ with their estimated standard errors are given in Table \ref{Table:alphabeta:real}, in which $\beta_{71}=0$ is set as a reference.  The estimates of heterogeneity parameters for in-degrees and out-degrees vary widely:
from the minimum $-7.36$ to maximum $-1.68$ for $\widehat{\alpha}_i$s and from $-1.32$ to $2.56$ for $\widehat{\beta}_i$s.
We then test three null hypotheses $\alpha_1=\alpha_4$, $\alpha_1=\beta_1$ and $\beta_1=\beta_4$, using
the proposed homogeneity test statistics $\hat{\xi}_{i,j} = |\hat{\alpha}_i-\hat{\alpha}_j|/(1/\hat{v}_{i,i}+1/\hat{v}_{j,j})^{1/2}$, $\hat{\zeta}_{i,j} = |\hat{\alpha}_i-\hat{\beta}_j|/(1/\hat{v}_{i,i}+1/\hat{v}_{n+j,n+j})^{1/2}$, and $\hat{\eta}_{i,j} = |\hat{\beta}_i-\hat{\beta}_j|/(1/\hat{v}_{n+i,n+i}+1/\hat{v}_{n+j,n+j})^{1/2}$ respectively. The obtained $p$-values turn out to be $3.5\times 10^{-4}$, $8.7\times 10^{-15}$ and $1.7\times 10^{-3}$, respectively, confirming the need to use our model for parameterizing the  in-degree
and out-degree of each node differently to characterize the heterogeneity of bi-degrees.
The estimated covariate effects, their  bias corrected estimators, their standard errors, and their $p$-values under the null of having no effects are reported in Table \ref{Table:gamma:realdata}. The five categorial variables status, gender, location and practice are all significant and positive, implying that a common value for any of these three variables increases the likelihood of two lawyers to have connection. This is consistent with Figure \ref{figure-data}.
On the other hand, the larger the difference between two lawyers' age or their years with the firm, the less likely they are friends. This makes sense intuitively.
}

{\renewcommand{\arraystretch}{1}
\begin{table}[h]\centering
\caption{The estimators of $\gamma_i$, the corresponding bias corrected estimators, the standard errors, and the $p$-values under the null $\gamma_i=0$ ($i=1, \ldots, 7$) for the Lazega's friendship data.}
\label{Table:gamma:realdata}
\begin{tabular}{ccc ccc cc c}
\hline
Covariate 
&  $\hat{\gamma}_i$ & $\hat{\gamma}_{bc, i}$ & $\hat{\sigma}_i$ &$p$-value  \\
\hline
status          
&  $1.066 $&$ 1.760 $&$ 0.155 $&$ <0.001$\\
gender          
&  $0.580 $&$ 0.962 $&$ 0.142 $&$ <0.001$\\
location        
&  $2.600 $&$ 3.225 $&$ 0.176 $&$ <0.001$\\
years           
&  $-0.108$&$ -0.064$&$ 0.014 $&$ <0.001$\\
age             
&  $-0.040$&$ -0.027$&$ 0.011 $&$ 0.015$ \\
practice        
&  $0.834$ &$ 1.112 $&$ 0.124 $&$ <0.001$ \\
school          
&  $0.267 $&$ -0.479 $&$ 0.123 $&$ <0.001$\\
\hline
\end{tabular}
\end{table}
{\renewcommand{\arraystretch}{1}

{\it The analysis of Sina Weibo data}.
We now analyze the Sina Weibo data collected by \cite{Cai:2018}.
Sina Weibo is the largest Twitter-type social media in China.
The original data contains $4077$ nodes in an official MBA program with directed edges representing who follows who. For our analysis, we first remove those nodes with zero in-degrees or out-degrees since in this case the MLEs of the corresponding degree parameters do not exist. The largest strong connected subgraph of the remaining data set is then examined. This leaves a connected network with $2242$ nodes. The minimum, $1/4$ quantile, $3/4$ quantile and maximum values of $d$ are $1$, $2$, $5$, $19$ and $715$,
respectively; those of $b$ are  $1$,  $4$,  $9$,  $22$,  $253$, respectively. It exhibits a strong degree heterogeneity.

Associated with each node are three variables: the number of characters in personal labels self-created by the users to describe their lifestyles (CHAR),  the cumulated number of Weibo posts (POST), and the time length since Weibo registration measured in months (TIME).
Before our analysis, these node attributes are normalized by subtracting the average and dividing their standard error.
Then the covariates of edges are formed by using the absolute difference distance.

 The two-step
iterative algorithm  in Section \ref{section:model} is used to find the MLEs.
The fitted values of the homophily parameters using model \eqref{Eq:density:whole} are summarized in Table \ref{Table:gamma:realdata2}.
From this table, we can see that all the node attributes are significant. In Figure 1 in the supplementary material, the histograms of the fitted values of the degree parameters are provided. We can see that the estimates of the heterogeneity parameters vary widely: from the minimum of $-2.03$ to the maximum of $4.13$ for $\hat{\beta}_j$'s
and from $-8.87$ to $-1.28$ for $\hat{\alpha}_i$'s. The histogram of $\hat{\beta}_j$'s indicates that $\beta_j$ may follow a normal distribution while that of $\hat{\alpha}_i$'s clearly indicates a skewed distribution.

{\renewcommand{\arraystretch}{1}
\begin{table}[h]\centering
\caption{The estimators of $\gamma_i$, the corresponding bias corrected estimators, the standard errors, and the $p$-values under the null $\gamma_i=0$ ($i=1, 2, 3$) for
the Sina Weibo data.}
\label{Table:gamma:realdata2}
\begin{tabular}{ccc ccc cc c}
\hline
Covariate 
&  $\hat{\gamma}_i$ & $\hat{\gamma}_{bc, i}$ & $\hat{\sigma}_i$ &$p$-value  \\
\hline
CHAR          
&  $0.004 $&$ -0.391 $&$ 0.018 $&$ < 10^{-3}$\\
POST          
&  $0.015 $&$ -0.143 $&$ 0.008  $&$ < 10^{-3}$\\
TIME        
&  $-0.010 $&$ -0.158 $&$ 0.008 $&$ < 10^{-3}$\\
\hline
\end{tabular}
\end{table}
{\renewcommand{\arraystretch}{1}

\section{Discussion}
\label{section:summary}

In this paper, we have derived the consistency and asymptotic normality
of the MLEs for estimating the parameters in model \eqref{Eq:density:whole} when the number of vertices goes to infinity.
{By allowing $\|\bs{\theta}^*\|_\infty$ to diverge to infinity, our model can handle networks with the number of edges growing with the number of node at a slow rate [\cite{Krivitsky:Handcock:Morris:2011}].
If the growth rate on the degree parameters increases too fast, however, the MLE fails to exist with
a positive frequency as demonstrated in the simulation.} Note that
the conditions imposed on $\|\bs{\theta}^*\|_\infty$
in Theorems \ref{Theorem:binary:con}--\ref{theorem:covariate:asym} may not be the best possible.
In particular, the conditions guaranteeing the asymptotic normality seem stronger than those guaranteeing the consistency.  For example, the consistency requires $\|\bs{\theta}^*\|_\infty \le \frac{1}{24}\log n$ while the asymptotic normality requires $\|\bs{\theta}^*\|_\infty \le \frac{1}{44}\log n$.
It would be  interesting to investigate whether these bounds can be improved.

{There is an implicit yet strong assumption for our model that the reciprocity parameter corresponding to the $p_1$-model in \cite{Holland:Leinhardt:1981} is zero.  However, if similarity terms  are included in the model, then there is a tendency toward reciprocity among nodes sharing similar node features. That would alleviate the lack of a reciprocity term to some extent, although it would not induce reciprocity between dissimilar nodes.}
To measure the reciprocity of dyads, it is natural to
incorporate the model term $\rho\sum_{i<j}a_{ij}a_{ji}$ of the $p_1$ model
into \eqref{Eq:density:whole}. In \cite{Yan:Leng:2013}, encouraging empirical results were reported regarding the distribution of the MLE in the $p_1$ model
without covariates. Nevertheless,
although only one new parameter is added, the problem of investigating the asymptotic
theory of the MLEs becomes more challenging. In particular, the Fisher information matrix for the parameter vector $(\rho, \alpha_1,\ldots,\alpha_n, \beta_1, \ldots, \beta_{n-1})$ is not diagonally dominant and thus does not belong to the class $\mathcal{L}_{n}(m, M)$.
In order to apply the method of proofs here, a new approximate matrix with high accuracy of the inverse of the Fisher information matrix is needed.
{On the other hand, various extensions of the $p_1$ model have been developed to allow the reciprocity parameters to depend in a linear fashion on individuals $i$ and $j$ [\cite{Fienberg:Wasserman:1981a}]
and block structures [\cite{Holland:Laskey:Leinhardt:1983, Wang:Wong:1987}]. Though these models may be more realistic, their Fisher information matrices are no longer diagonally dominant. As a result, investigating their asymptotic theory becomes much more involved and we plan to do it in a future work.}

\section*{Acknowledgements}
We are very grateful to three referees, an associate editor, and the Editor for
their valuable comments that have greatly improved the manuscript.
Our simulation code is available on request.
The authors thank Wei Cai at Northeast China Normal University for sharing the Sina Weibo data.
Yan's research is partially supported by the National Natural
Science Foundation of China (No. 11771171). Jiang's research is partially supported by the Hong Kong RGC grant (PolyU 253023/16P).
Leng's research is partially supported by a Turing Fellowship under the EPSRC
grant EP/N510129/1.

\section{Appendix: Proofs for theorems}
\label{section:proofs}
In this section we give the proofs for Theorems \ref{theorem:con-beta} and \ref{theorem:covariate:asym} in
Section \ref{section:main}, and the proofs for Theorems \ref{Theorem:binary:con} and \ref{Theorem:binary:central} are put in the online supplementary material.

\subsection{Proof of Theorem \ref{theorem:con-beta}}
\label{subsection:63:theorem2}

Recall that $\bs{\theta}=(\bs{\alpha}, \bs{\beta})$.
In what follows, the calculations are based on the condition that $\bs{\gamma}\in \Gamma$,
 $\|\bs{\theta}\|_\infty\le n^\tau$, where $\tau\in (0,1/2)$ is a positive constant.
By calculations, we have
\begin{eqnarray*}
\ell(\bs{\gamma}, \bs{\theta} ) & = &   \ell(\bs{\gamma}, \bs{\theta} ) - \E[\ell(\bs{\gamma}, \bs{\theta} )]
+ \E[\ell(\bs{\gamma}, \bs{\theta} )] \\
& = & \sum_{i\neq j} (a_{ij}-p_{ij}) ( Z_{ij}^\top \bs{\gamma} + \alpha_i + \beta_j) + \E[ \ell(\bs{\gamma}, \bs{\theta} )  ],
\end{eqnarray*}
where $\E[ \ell(\bs{\gamma}, \bs{\theta} ) ]$ is given in \eqref{eq:exp:like} and $p_{ij}=p_{ij}(\bs{\gamma}^*, \alpha_i^*, \beta_j^*)$.
By the triangle inequality, we have
\begin{equation}\label{eq:theorem2-tri}
\Big|\frac{1}{n(n-1)} \sum_{i=1}^n \sum_{j\neq i} (a_{ij}-p_{ij})Z_{ij}^\top \bs{\gamma} \Big|
\le \frac{1}{n}\sum_{i=1}^n \Big|\frac{1}{n-1}\sum_{j\neq i} (a_{ij}-p_{ij})Z_{ij}^\top \bs{\gamma} \Big|.
\end{equation}
Since we assume that $Z_{ij}$'s lie in a compact subset of $\R^p$  and {the parameter space $\Theta$ of covariate parameters is compact}, we have for all $i\neq j$,
\begin{equation}\label{eq:kappa}
\max_{\bs{\gamma}\in\Theta} | Z_{ij}^\top \bs{\gamma} | \le \kappa,
\end{equation}
where $\kappa$ is a constant.
By inequality \eqref{eq:kappa}, $a_{ij}Z_{ij}^\top \bs{\gamma}$ is a bounded random variable with the upper bound $\kappa$.
By \citeauthor{Hoeffding:1963}'s (\citeyear{Hoeffding:1963}) inequality, we have
\[
P\Big( \Big|\frac{1}{n-1} \sum_{j\neq i} (a_{ij}-p_{ij}) Z_{ij}^\top \bs{\gamma} \Big| \ge \epsilon \Big ) \le 2 \exp\left( -\frac{ (n-1)\epsilon^2 }{
2\kappa^2  } \right).
\]
By taking $\epsilon=2\kappa [\log (n-1) /(n-1)]^{1/2}$, we have
\[
P\left( \Big|\frac{1}{n-1} \sum_{j\neq i} (a_{ij}-p_{ij}) Z_{ij}^\top \bs{\gamma} \Big| \ge 2\kappa\sqrt{ \frac{\log (n-1)}{(n-1)} } \right ) \le  \frac{4}{(n-1)^2}.
\]
Therefore, we have
\begin{eqnarray*}
 & & P\left( \Big| \frac{1}{n(n-1)} \sum_{i=1}^n \sum_{j\neq i} ( a_{ij}-p_{ij})Z_{ij}^\top \bs{\gamma} \Big|
\ge 2\kappa\sqrt{ \frac{\log (n-1)}{(n-1)} } \right) \\
&\le & P\left( \frac{1}{n}\sum_{i=1}^n \Big|\frac{1}{n-1}\sum_{j\neq i} (a_{ij}-p_{ij})Z_{ij}^\top \bs{\gamma} \Big| \ge 2\kappa\sqrt{ \frac{\log (n-1)}{(n-1)} } \right ) \\
& \le & P\left( \bigcup_{i=1}^n \Big|\frac{1}{n-1}\sum_{j\neq i} (a_{ij}-p_{ij})Z_{ij}^\top \bs{\gamma} \Big| \ge 2\kappa\sqrt{ \frac{\log (n-1)}{(n-1)} } \right) \\
& \le &  \frac{n}{(n-1)^2}.
\end{eqnarray*}
In the above, the first inequality is due to \eqref{eq:theorem2-tri}.
Note that $\|\bs{\alpha}\|\le n^\tau$ and $\|\bs{\beta}\|\le n^\tau$. Similarly,
with probability at most $n/(n-1)^2$, we have
\begin{eqnarray*}
\Big|\frac{1}{n(n-1)} \sum_{i=1}^n \sum_{j\neq i} ( a_{ij}-p_{ij})\alpha_i \Big|
& \ge & \frac{1}{n(n-1)}\sum_{i=1}^n \Big|\sum_{j\neq i} \frac{\alpha_i}{n-1}(a_{ij}-p_{ij}) \Big| \\
&\ge &\frac{1}{n(n-1)}\cdot n \cdot n^\tau \sqrt{ \frac{\log (n-1) }{n-1} }=\frac{ (\log n)^{1/2} }{ n^{1/2-\tau}},
\end{eqnarray*}
and
\[
\Big|\frac{1}{n(n-1)} \sum_{i=1}^n \sum_{j\neq i} ( a_{ij}-p_{ij})\beta_j\Big|
\ge \frac{ (\log n)^{1/2} }{ n^{1/2-\tau}}.
\]
Hence, with probability at least $1-3n/(n-1)^2$, we have
\[
\max_{\bs{\gamma}\le \Gamma, \|\bs{\theta}\|_\infty\le n^\tau } \Big| \frac{1}{n(n-1)} \sum_i \sum_{j\neq i} (a_{ij}-p_{ij}) (Z_{ij}^\top \bs{\gamma} + \alpha_i + \beta_j) \Big|<  \frac{ (\log n)^{1/2} }{ n^{1/2-\tau}},
\]
or equivalently,
\begin{equation}\label{eq:likelihood:error}
\max_{\bs{\gamma}\le \Gamma, \|\bs{\theta}\|_\infty \le n^\tau } \Big| \frac{1}{n(n-1)} \left \{ \ell( \bs{\gamma}, \bs{\theta}) - \E[\ell(\bs{\gamma}, \bs{\theta})] \right \} \Big|
<  \frac{ (\log n)^{1/2} }{ n^{1/2-\tau}}.
\end{equation}

Let $B_n(\rho)=\{ \bs{\gamma}: \|\bs{\gamma} - \bs{\gamma}^*\|_\infty < \rho \}$ be an open ball in $\Gamma$ with $\bs{\gamma}^*$
as its center and $\rho$ as its radius, and $B_n^c(\rho)$ be its complement in $\Gamma$. Define
\[
\epsilon_n(\rho) =\frac{1}{n(n-1)} \left\{ \max_{ \|\bs{\theta}\|_\infty \le n^\tau }  \E[\ell(\bs{\gamma}^*, \bs{\theta} ] -
\max_{\bs{\gamma}\in B_n^c(\rho), \|\bs{\theta}\|_\infty \le n^\tau }
\E[ \ell (\bs{\gamma}, \bs{\theta} )  ]\right \},
\]
and
\[
\epsilon_n (\rho_n) = \arg \min_{\rho} \epsilon_n ( \rho ) >  \frac{ 2(\log n)^{1/2} }{ n^{1/2-\tau} }.
\]
Recall that $\E[\ell(\bs{\gamma}^*, \bs{\theta}) ] = \sum_{i<j} D_{KL}(p_{ij}\| p_{ij}(\bs{\gamma}^*, \alpha_i,  \beta_j) )
- \sum_{i<j} S(p_{ij})$. Therefore,
\begin{eqnarray*}
&&\max_{ \|\bs{\theta}\|_\infty \le n^\tau }  \E[\ell(\bs{\gamma}^*, \bs{\theta} ] -
\max_{\bs{\gamma}\in B_n^c(\rho), \|\bs{\theta}\|_\infty \le n^\tau }
\E[ \ell (\bs{\gamma}, \bs{\theta} )  ]  \\
& = & \max_{ \|\bs{\theta}\|_\infty \le n^\tau }  \sum_{i<j} D_{KL}(p_{ij}\| p_{ij}(\bs{\gamma}^*, \alpha_i,  \beta_j) ) -
\max_{\bs{\gamma}\in B_n^c(\rho), \|\bs{\theta}\|_\infty \le n^\tau }
\sum_{i<j} D_{KL}(p_{ij}\| p_{ij}(\bs{\gamma}^*, \alpha_i,  \beta_j) ).
\end{eqnarray*}
By the property of the Kullback-Leibler divergence
and noticing that $p_{ij}$ is a monotonous function on $\gamma_k$, $\alpha_i$ and $\beta_j$,
$\E[\ell ( \bs{\gamma}, \bs{\theta} )]$ is uniquely maximized at ($\bs{\gamma}^*$, $\bs{\theta}^*$).
Therefore, $\epsilon_n$ will be strictly greater than zero for each fixed $n$.
Further,
since $\epsilon_n(\rho)$ is a continuous increasing function on $\rho$ as $\rho$ increases, we have
\begin{equation}\label{eq:rhon}
\rho_n \to 0, \mbox{~~as~~} n \to\infty.
\end{equation}

Let $E_n$ be the event
\[
\frac{1}{n(n-1)} \Big| \max_{\|\bs{\theta}\|_\infty \le n^\tau }  \ell(\bs{\gamma}, \bs{\theta} )
- \max_{\|\bs{\theta}\|_\infty \le n^\tau }  \E[\ell(\bs{\gamma}, \bs{\theta} )  ] \Big| < \frac{\epsilon_n(\rho_n)}{2}.
\]
for all $\bs{\gamma} \in \Gamma$. Under event $E_n$, we get the inequalities
\begin{equation}\label{inequ:t2:a}
\max_{\|\bs{\theta}_\infty \|\le n^\tau } \frac{1}{n(n-1)} \E[ \ell( \bs{\widehat{\gamma}}, \bs{\theta} ) ]
> \frac{1}{n(n-1)} \ell(\bs{\widehat{\gamma}}, \bs{\widehat{\theta}} ) - \frac{ \epsilon_n(\rho_n) }{2},
\end{equation}
\begin{equation}\label{inequ:t2:b}
\max_{ \|\bs{\theta}\|_\infty \le n^\tau } \frac{1}{n(n-1)} \ell( \bs{\gamma}^*, \bs{\theta} )
> \max_{ \|\bs{\theta}\|_\infty \le n^\tau } \frac{1}{n(n-1)} \E[ \ell(\bs{\gamma}^*, \bs{\theta} )  ] - \frac{ \epsilon_n(\rho_n) }{2}.
\end{equation}
According to the definition of the restricted MLE, we have that
\[
\frac{1}{n(n-1)} \ell( \bs{\widehat{\gamma}}, \bs{\widehat{\theta}} )
\ge \max_{ \|\bs{\theta}\|\le n^\tau } \frac{1}{n(n-1)} \ell( \bs{\widehat{\gamma}}, \bs{\theta} ).
\]
Then, by inequality \eqref{inequ:t2:a}, we have
\begin{equation}\label{inequ:t2:c}
\max_{ \|\bs{\theta}\|_\infty \le n^\tau } \frac{1}{n(n-1)} \E[ \ell (\bs{\widehat{\gamma}}, \bs{\theta} ) ]
> \max_{ \|\bs{\theta}\|_\infty \le n^\tau } \frac{1}{n(n-1)} \ell(\bs{\widehat{\gamma}}, \bs{\theta} ) - \frac{ \epsilon_n}{2}.
\end{equation}
Adding both sides of \eqref{inequ:t2:b} and \eqref{inequ:t2:c} gives
\begin{eqnarray*}
&& \max_{\|\bs{\theta}\|_\infty \le n^\tau } \frac{1}{n(n-1)} \E[ \ell(\bs{\widehat{\gamma}}, \bs{\theta} ) ]
- \Big[ \max_{ \|\bs{\theta}\|_\infty \le n^\tau } \frac{1}{n(n-1)} \ell(\bs{\widehat{\gamma}}, \bs{\theta} ) - \max_{ \|\bs{\theta}\|_\infty \le n^\tau } \frac{1}{n(n-1)} \ell( \bs{\gamma}^*, \bs{\theta} )\Big]
\\
& > & \max_{ \|\bs{\theta}\|_\infty \le n^\tau } \frac{1}{n(n-1)} \E[ \ell(\bs{\gamma}^*, \bs{\theta} ) ] - \epsilon_n(\rho_n)
\\
& = & \max_{\bs{\gamma} \in B_n^c, \|\bs{\theta}\|_\infty \le n^\tau } \frac{1}{n(n-1)} \E[ \ell(\bs{\gamma}, \bs{\theta})  ],
\end{eqnarray*}
where the equality follows the definition of $\epsilon_n$.
By noting that
$$\max_{ \|\bs{\theta}\|_\infty \le n^\tau } \frac{1}{n(n-1)} \ell(\bs{\widehat{\gamma}}, \bs{\theta} ) \ge \max_{ \|\bs{\theta}\|_\infty \le n^\tau } \frac{1}{n(n-1)} \ell( \bs{\gamma}^*, \bs{\theta} ),$$ we have
\[
\max_{\|\bs{\theta}\|_\infty \le n^\tau } \frac{1}{n(n-1)} \E[ \ell(\bs{\widehat{\gamma}}, \bs{\theta} ) ] > \max_{\bs{\gamma} \in B_n^c, \|\bs{\theta}\|_\infty \le n^\tau } \frac{1}{n(n-1)} \E[ \ell(\bs{\gamma}, \bs{\theta})  ].
\]
From the above equation, we have that $E_n \Rightarrow \bs{\widehat{\gamma}}\in B_n(\rho_n)$. Therefore $P(E_n) \le P( \bs{\widehat{\gamma}} \in B_n(\rho_n) )$.
Inequality \eqref{eq:likelihood:error} implies that $\lim_{n\to\infty} P(E_n) =1$
according to the definition of $\rho_n$.
By \eqref{eq:rhon}, it follows that $\bs{\widehat{\gamma}} \stackrel{p}{\to} \bs{\gamma}^*$.

\subsection{Derivation of approximate expression for $I_*(\bs{\gamma})$}
\label{subsection:derivation}
Recall that $H$ is the Hessian matrix of the log-likelihood function \eqref{eq:likelihood}:
\[
H=\begin{pmatrix} H_{\gamma\gamma}  & H_{\gamma\theta} \\
H_{\gamma\theta}^\top & -V
\end{pmatrix},
\]
where
\begin{equation}\label{eq:Hgammagamma}
-H_{\gamma\gamma} = \sum_{i\neq j} p_{ij}(1-p_{ij})Z_{ij}Z_{ij}^\top,
\end{equation}
and
\[
-H_{\gamma \theta}^\top = \begin{pmatrix} \sum_{j\neq 1} p_{1j}(1-p_{1j})Z_{1j}^\top \\
\vdots \\
\sum_{j\neq n}  p_{nj}(1-p_{nj})Z_{nj}^\top \\
\sum_{i\neq 1}  p_{i1}(1-p_{i1})Z_{i1}^\top \\
\vdots \\
\sum_{i\neq n-1}  p_{i,n-1}(1-p_{i,n-1})Z_{i,n-1}^\top
\end{pmatrix}.
\]
In what follows, we will derive the approximate expression
of $I_*(\bs{\gamma})$.
Let $(1)_{m\times n}$ be an $m\times n$ matrix whose elements all are $1$. By calculations, we have
\begin{eqnarray*}
S H_{\gamma\theta}^\top = DH_{\gamma\theta}^\top + \frac{1}{v_{2n,2n}}\begin{pmatrix}
(1)_{n\times n} & (-1)_{n\times (n-1)} \\
(-1)_{(n-1)\times n} & (1)_{(n-1)\times (n-1) }
\end{pmatrix} H_{\gamma\theta}^\top,
\end{eqnarray*}
where $D=\diag(1/v_{11}, \ldots, 1/v_{2n-1, 2n-1})$.
By noting that
\[
\sum_{i=1}^n \sum_{j\neq i} p_{ij}(1-p_{ij})Z_{ij}^\top - \sum_{j=1}^{n-1} \sum_{i\neq j} p_{ij}(1-p_{ij})Z_{ij}^\top
= \sum_{i\neq n} p_{in}(1-p_{in})Z_{in}^\top,
\]
we have
\begin{eqnarray}
\nonumber
H_{\gamma\theta} S H_{\gamma\theta}^\top & = & H_{\gamma\theta} D H_{\gamma\theta}^\top
+ \frac{1}{v_{2n,2n}} H_{\gamma\theta}\begin{pmatrix} (1)_{n\times 1} \\
(-1)_{(n-1)\times 1}
\end{pmatrix}\sum_{i\neq n} p_{in}(1-p_{in})Z_{in}^\top
\\
\nonumber
& = & \sum_{i=1}^n \frac{1}{v_{ii}} \Big(\sum_{j\neq i} p_{ij}(1-p_{ij})Z_{ij}\Big)\Big(\sum_{j\neq i} p_{ij}(1-p_{ij}) Z_{ij}^\top \Big) \\
\label{eq:HSH}
&&+\sum_{j=1}^n \frac{1}{v_{n+j, n+j}} \Big( \sum_{i\neq j} p_{ij}(1-p_{ij})Z_{ij} \Big)\Big(\sum_{i\neq j} p_{ij}(1-p_{ij})Z_{ij}^\top \Big).
\end{eqnarray}
By Lemma \ref{lemma:inverse:appro}, we have
\[
\| V^{-1} - S \| \le \frac{ c_1M^2 }{ m^3(n-1)} \le \frac{c_1}{(n-1)^2}\times \bigg(\frac{1}{4}\bigg)^2 \times
\frac{ (1+e^{2\|\theta^*\|_\infty+\kappa})^6 }{ (e^{2\|\theta^*\|_\infty+\kappa})^3} = O\bigg( \frac{ e^{6\|\theta^*\|_\infty} }{n^2} \bigg).
\]
Therefore,
\[
\|H_{\gamma \theta} ( V^{-1} - S ) H_{\gamma \theta}^\top \|_\infty \le
\|H_{\gamma \theta}\|_\infty^2 \| V^{-1}  - S \|_\infty
\le O(n^2)\times   O\bigg( n \frac{ e^{6\|\theta^*\|_\infty} }{n^2} \bigg) = O( ne^{6\|\theta^*\|_\infty}).
\]
Recall that $N= n(n-1)$ and
note that
\[
(H_{\gamma\gamma} + H_{\gamma \theta} V^{-1} H_{\gamma \theta}^\top ) =
H_{\gamma\gamma} + H_{\gamma \theta} S H_{\gamma \theta}^\top + H_{\gamma \theta} ( V^{-1} - S ) H_{\gamma \theta}^\top.
\]
Therefore, we have
\begin{equation}\label{eq:I:approximation}
-N^{-1} (H_{\gamma\gamma}  +  H_{\gamma \theta} V^{-1} H_{\gamma \theta}^\top)
= -N^{-1} (H_{\gamma\gamma} + H_{\gamma \theta} S H_{\gamma \theta}^\top) + o(1),
\end{equation}
where $H_{\gamma\gamma}$ and $H_{\gamma \theta} S H_{\gamma \theta}^\top$ are given in \eqref{eq:Hgammagamma} and \eqref{eq:HSH}, respectively.
It shows that the limit of $-N^{-1} (H_{\gamma\gamma} + H_{\gamma \theta} S H_{\gamma \theta}^\top)$ is $I_*(\bs{\gamma})$ defined in \eqref{eq:I0:beta}.

\subsection{Proofs for Theorem \ref{theorem:covariate:asym}}
\label{subsection:65:theorem4}

Let $\bs{\hat{\theta}}^*  = \arg\max_{\bs{\theta}} \ell( \bs{\gamma}^*, \bs{\theta})$.
Similar to the proofs of Theorems 1 and 2 in \cite{Yan:Leng:Zhu:2016}, we have two lemmas below,
which will be used in the proof of Theorem \ref{theorem:covariate:asym}.
\begin{lemma}\label{lemma:th4:d}
Assume that $\bs{\theta}^*\in \R^{2n-1}$ with $\|\bs{\theta}^*\|_\infty \le \tau \log n $, where $0<\tau<1/24$ is a constant,
and that $A \sim \P_{\bs{\theta}^*}$. Then as $n$ goes to infinity,
with probability approaching one, the $\bs{\hat{\theta}}^*$ exists
and satisfies
\[
\|\bs{\hat{\theta}}^* - \bs{\theta}^* \|_\infty = O_p\left( \frac{ (\log n)^{1/2}e^{8\|\bs{\theta}^*\|_\infty} }{ n^{1/2} } \right)=o_p(1).
\]
\end{lemma}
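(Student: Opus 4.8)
The plan is to observe that fixing $\bs{\gamma}=\bs{\gamma}^*$ reduces model \eqref{Eq:density:whole} to a directed $\beta$-model in which the term $Z_{ij}^\top\bs{\gamma}^*$ acts as a \emph{known, bounded} offset for each dyad, since $|Z_{ij}^\top\bs{\gamma}^*|\le\kappa$ by \eqref{eq:kappa}. Consequently the argument runs parallel to the proof of Theorem \ref{Theorem:binary:con} and to Theorems 1--2 of \cite{Yan:Leng:Zhu:2016}, the offset being absorbed into the constants throughout. The estimator $\bs{\hat{\theta}}^*$ is the root of the score map $F_{\gamma^*}(\bs{\theta})$ of \eqref{eq:Fgamma}, and I would analyze it through the Newton iteration $\bs{\theta}^{(k+1)}=\bs{\theta}^{(k)}-[F_{\gamma^*}'(\bs{\theta}^{(k)})]^{-1}F_{\gamma^*}(\bs{\theta}^{(k)})$ started at the true value $\bs{\theta}^{(0)}=\bs{\theta}^*$, showing that it converges geometrically to a limit in a small $\ell_\infty$-ball around $\bs{\theta}^*$.

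First I would record the two structural facts that drive everything. The Jacobian $-F_{\gamma^*}'(\bs{\theta})=V(\bs{\theta})$ is the Fisher information matrix and lies in $\mathcal{L}_n(m,M)$; bounding each $p_{ij}(1-p_{ij})$ above by $1/4$ and below by a multiple of $e^{-(2\|\bs{\theta}\|_\infty+\kappa)}$ gives $M\asymp 1$ and $m\asymp e^{-2\|\bs{\theta}\|_\infty}$, so that $M/m\asymp e^{2\|\bs{\theta}\|_\infty}=o(n)$ under $\|\bs{\theta}^*\|_\infty\le\tau\log n$ with $\tau<1/24$. This is exactly the regime in which Lemma \ref{lemma:inverse:appro} supplies the explicit, accurate approximation $S$ of \eqref{definition:S} to $V^{-1}$, with $\|V^{-1}-S\|=O(e^{6\|\bs{\theta}^*\|_\infty}/n^2)$. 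Second, the score evaluated at the truth, $F_{\gamma^*,i}(\bs{\theta}^*)=\sum_{j\ne i}(a_{ij}-p_{ij})$ (and similarly in the in-degree coordinates), is a sum of $n-1$ independent centered bounded variables; Hoeffding's inequality together with a union bound over the $2n-1$ coordinates gives $\|F_{\gamma^*}(\bs{\theta}^*)\|_\infty=O_p((n\log n)^{1/2})$ with probability tending to one.

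With these in hand I would run the Newton--Kantorovich estimate: control the $\ell_\infty$-norm of the Newton increment $[F_{\gamma^*}'(\bs{\theta}^{(k)})]^{-1}F_{\gamma^*}(\bs{\theta}^{(k)})$ using the explicit form of $S$ in place of $V^{-1}$, and bound the variation of $F_{\gamma^*}'$ by a Lipschitz estimate coming from the third derivatives of the log-likelihood (each again controlled by powers of $e^{\|\bs{\theta}\|_\infty}$). One then verifies that, inside a ball of the claimed radius, the Kantorovich quantity --- the product of the inverse-Jacobian bound, the Lipschitz constant, and the initial residual --- stays below the convergence threshold; tracking the accumulated powers of $e^{\|\bs{\theta}^*\|_\infty}$ through this product is what produces the exponent $8\|\bs{\theta}^*\|_\infty$ in the final rate and forces the constraint $\tau<1/24$. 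Geometric convergence yields a limit $\bs{\hat{\theta}}^*$ solving the score equations within the stated neighborhood, giving existence with probability tending to one and the $\ell_\infty$ rate; uniqueness follows from the strict concavity of $\ell(\bs{\gamma}^*,\cdot)$, equivalently from the positive definiteness of $V$.

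The main obstacle is the sharp control of the inverse Jacobian. A generic bound of the form $\|\bs{\hat{\theta}}^*-\bs{\theta}^*\|_\infty\le\|V^{-1}\|_\infty\,\|F_{\gamma^*}(\bs{\theta}^*)\|_\infty$ is far too lossy: because $\|V^{-1}\|_\infty\asymp e^{2\|\bs{\theta}^*\|_\infty}$ it would not even yield an $o(1)$ error, as it ignores the cancellations among the coordinates of the score. It is precisely the explicit structure of $S$ furnished by Lemma \ref{lemma:inverse:appro}, rather than mere diagonal dominance, that encodes these cancellations and keeps the exponential factors small enough to admit a nontrivial range of $\tau$; making this quantitative, uniformly over the relevant neighborhood, is the crux of the proof.
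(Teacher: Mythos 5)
Your proposal takes essentially the same route as the paper: the paper establishes this lemma exactly by the Newton--Kantorovich iteration started at $\bs{\theta}^*$ (invoking the argument of Theorem \ref{Theorem:binary:con} and Theorems 1--2 of \cite{Yan:Leng:Zhu:2016}), with $Z_{ij}^\top\bs{\gamma}^*$ absorbed as a bounded offset via \eqref{eq:kappa}, Hoeffding-plus-union-bound control $\|F_{\gamma^*}(\bs{\theta}^*)\|_\infty=O_p((n\log n)^{1/2})$, and the explicit approximation $S$ of Lemma \ref{lemma:inverse:appro} standing in for $V^{-1}$ to exploit the cancellation structure a crude $\|V^{-1}\|_\infty$ bound would miss. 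Your bookkeeping of the exponential factors ($M/m\asymp e^{2\|\bs{\theta}^*\|_\infty}=o(n)$, the resulting $e^{8\|\bs{\theta}^*\|_\infty}$ rate, and the $\tau<1/24$ constraint) matches the paper's, so the plan is correct and faithful to the original proof.
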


\begin{lemma}\label{lemma:th4:c}
If $\|\bs{\theta}^*\|_\infty \le \tau\log n$
and $\tau < 1/40$, then for any $i$,
\[
\hat{\theta}_i^* - \theta_i^* = [S\{ \mathbf{g} - \E(\mathbf{g}) \} ]_i + o_p( n^{-1/2}).
\]
\end{lemma}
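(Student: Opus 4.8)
The plan is to linearise the score system for $\bs{\theta}$ about the truth and invert the Fisher information using the explicit approximation $S$ of \eqref{definition:S}. Write $F_{\bs{\gamma}^*}(\bs{\theta})$ for the score system in \eqref{eq:Fgamma}; by definition $\bs{\hat{\theta}}^*$ solves $F_{\bs{\gamma}^*}(\bs{\hat{\theta}}^*)=\mathbf{0}$, whereas evaluating the score at the true value gives exactly
\[
F_{\bs{\gamma}^*}(\bs{\theta}^*)=\mathbf{g}-\E(\mathbf{g}),
\]
since the fitted degrees at $\bs{\theta}^*$ coincide with the expected degrees. The Jacobian of $F_{\bs{\gamma}^*}$ equals $-V(\bs{\theta})$, the negative Fisher information, whose value at the truth lies in $\mathcal{L}_n(m,M)$. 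First I would invoke Lemma~\ref{lemma:th4:d} so that, with probability tending to one, $\bs{\hat{\theta}}^*$ exists and satisfies $\|\bs{\hat{\theta}}^*-\bs{\theta}^*\|_\infty=O_p((\log n)^{1/2}e^{8\|\bs{\theta}^*\|_\infty}/n^{1/2})$; this legitimises a Taylor expansion about $\bs{\theta}^*$.

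Next I would Taylor-expand $F_{\bs{\gamma}^*}(\bs{\hat{\theta}}^*)=\mathbf{0}$ to second order about $\bs{\theta}^*$, which after rearranging yields
\[
\bs{\hat{\theta}}^*-\bs{\theta}^* = V^{-1}(\bs{\theta}^*)\{\mathbf{g}-\E(\mathbf{g})\} + V^{-1}(\bs{\theta}^*)R,
\]
where $R$ is the second-order remainder whose $i$-th entry is a quadratic form in $\bs{\hat{\theta}}^*-\bs{\theta}^*$ with coefficient matrix $\nabla^2 b_i(\bar{\bs{\theta}})$, $b_i(\bs{\theta})$ being the expected-degree map and $\bar{\bs{\theta}}$ an intermediate point. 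Substituting the approximate inverse $S$ of \eqref{definition:S} gives the decomposition
\[
\bs{\hat{\theta}}^*-\bs{\theta}^* = S\{\mathbf{g}-\E(\mathbf{g})\} + (V^{-1}-S)\{\mathbf{g}-\E(\mathbf{g})\} + V^{-1}R,
\]
and it remains to show that the two trailing terms are $o_p(n^{-1/2})$ coordinatewise. For the first remainder I would combine the bound $\|V^{-1}-S\|=O(e^{6\|\bs{\theta}^*\|_\infty}/n^2)$ from Lemma~\ref{lemma:inverse:appro} with the fact that each entry of $\mathbf{g}-\E(\mathbf{g})$ is a centred sum of at most $n-1$ Bernoulli variables and so has variance $O(n)$; a second-moment estimate then yields an entrywise bound of order $e^{6\|\bs{\theta}^*\|_\infty}/n$, which is $o_p(n^{-1/2})$ whenever $\|\bs{\theta}^*\|_\infty\le\tau\log n$ with $\tau$ small.

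The genuinely delicate term is $V^{-1}R$, which I expect to be the main obstacle. Using the consistency rate above gives $\|R\|_\infty=O_p(\log n\cdot e^{16\|\bs{\theta}^*\|_\infty})$, but a naive triangle-inequality bound on $[V^{-1}R]_i$ accumulates $n$ such terms and diverges. The key is to exploit the structure of $S$ (a diagonal part plus the rank-deficient part whose off-diagonal entries are $\pm 1/v_{2n,2n}$) together with the edge-count identity $\sum_i d_i=\sum_j b_j$: differentiating the corresponding identity $\sum_{i=1}^n b_i(\bs{\theta})=\sum_{j=1}^n b_{n+j}(\bs{\theta})$ twice shows that $\sum_{i=1}^n R_i-\sum_{j=1}^{n-1}R_{n+j}=R_{2n}$ is a single term rather than an $n$-term sum. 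This collapses $[SR]_i$ to $R_i/v_{i,i}+R_{2n}/v_{2n,2n}$, of order $\log n\cdot e^{18\|\bs{\theta}^*\|_\infty}/n$, while the residual $(V^{-1}-S)R$ is handled through $\|V^{-1}-S\|\sum_j|R_j|$.

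Balancing these exponential factors against the target rate $n^{-1/2}$ is precisely what dictates the admissible range of $\tau$: each occurrence of $\|\bs{\hat{\theta}}^*-\bs{\theta}^*\|_\infty^2$ contributes $e^{16\|\bs{\theta}^*\|_\infty}$, and each application of $V^{-1}$ or of $\|V^{-1}-S\|$ adds further factors $e^{2\|\bs{\theta}^*\|_\infty}$ or $e^{6\|\bs{\theta}^*\|_\infty}$, so that demanding the worst of these products to be $o(n^{-1/2})$ forces $\tau<1/40$. The routine pieces are the differentiations producing $\nabla^2 b_i$ and the Hoeffding-type tails controlling $\mathbf{g}-\E(\mathbf{g})$; the real work is the careful bookkeeping of these exponential orders and the cancellation argument that prevents the remainder from accumulating across coordinates.
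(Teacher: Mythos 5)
Your proposal is correct and follows essentially the same route as the paper, whose proof of this lemma simply defers to the arguments of Theorems 1 and 2 in \cite{Yan:Leng:Zhu:2016}: a second-order Taylor expansion of the score system $F_{\bs{\gamma}^*}$ around $\bs{\theta}^*$ using $F_{\bs{\gamma}^*}(\bs{\theta}^*)=\mathbf{g}-\E(\mathbf{g})$, substitution of the approximate inverse $S$ from Lemma \ref{lemma:inverse:appro}, a second-moment bound for $(V^{-1}-S)\{\mathbf{g}-\E(\mathbf{g})\}$, and control of the quadratic remainder through the structure of $S$. In particular, your collapse of $\sum_{i=1}^n R_i-\sum_{j=1}^{n-1}R_{n+j}$ to the single term $R_{2n}$ via the identity $\sum_i d_i=\sum_j b_j$ (so that $[SR]_i$ reduces to $R_i/v_{i,i}+R_{2n}/v_{2n,2n}$), together with the exponential-factor bookkeeping that dictates the range of $\tau$, is exactly the device used in that reference.
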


For convenience, define $\ell_{ij}(\bs{\gamma}, \bs{\theta})$ by
the $(i,j)^{th}$ dyad's contributions to the log-likelihood function in \eqref{eq:likelihood}, i.e.,
\[
\ell_{ij}( \bs{\gamma}, \bs{\theta})= a_{ij}( Z_{ij}^\top \bs{\gamma} + \alpha_i + \beta_j) - \log( 1 + e^{ Z_{ij}^\top \bs{\gamma} + \alpha_i + \beta_j } ).
\]
Let $T_{ij}$ be a $2n-1$ dimensional vector with ones in its $i$th and $n+j$th elements and zeros otherwise.
Let $s_{\gamma_{ij} }(\bs{\gamma}, \bs{\theta})$ and $s_{\theta_{ij}}(\bs{\gamma}, \bs{\theta})$
denote the score of $\ell_{ij}( \bs{\gamma}, \bs{\theta})$ associated with the
vector parameter $\bs{\gamma}$ and $\bs{\theta}$, respectively:
\[
s_{\gamma_{ij} }(\bs{\gamma}, \bs{\theta}) = \frac{ \partial \ell_{ij} }{ \partial \bs{\gamma} } =
a_{ij} Z_{ij} - \frac{ Z_{ij} e^{ Z_{ij}^\top \bs{\gamma} + \alpha_i + \beta_j } }{ 1 + e^{ Z_{ij}^\top \bs{\gamma} + \alpha_i + \beta_j } },
\]
\[
s_{\theta_{ij}}(\bs{\gamma}, \bs{\theta}) = \frac{ \partial \ell_{ij} }{ \partial \bs{\theta} }
= a_{ij} T_{ij} - \frac{  e^{ Z_{ij}^\top \bs{\gamma} + \alpha_i + \beta_j } }{ 1 + e^{ Z_{ij}^\top \bs{\gamma} + \alpha_i + \beta_j } } T_{ij}.
\]
Then we have the following lemma, whose proof is given in online supplementary material.

\begin{lemma}\label{lemma:th4-b}
Let $H_{\bs{\theta}\bs{\theta}}= -V$ and
\begin{equation}\label{eq:sgamma}
s_{\gamma_{ij}}^* (\bs{\gamma}^*, \bs{\theta}^*) := s_{\gamma_{ij}}( \bs{\gamma}^*, \bs{\theta}^*)
- H_{\bs{\gamma} \bs{\theta} } H_{\bs{\theta}\bs{\theta}}^{-1} s_{\theta_{ij}}(\bs{\gamma}^*, \bs{\theta}^*).
\end{equation}
Then $\frac{1}{\sqrt{N}}[I_n(\bs{\gamma}^* )]^{-1/2} \sum_{i=1}^n \sum_{j\neq i} s_{\gamma_{ij}}^* (\bs{\gamma}^*, \bs{\theta}^*)$
follows asymptotically a $p$-dimensional multivariate standard normal distribution.
\end{lemma}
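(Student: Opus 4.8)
The plan is to reduce the statement to a Lindeberg--Feller central limit theorem for a sum of independent, mean-zero, bounded random vectors. First I would evaluate the scores at the true parameter. With $p_{ij}=p_{ij}(\bs{\gamma}^*,\alpha_i^*,\beta_j^*)$, the dyadic scores simplify to $s_{\gamma_{ij}}(\bs{\gamma}^*,\bs{\theta}^*)=(a_{ij}-p_{ij})Z_{ij}$ and $s_{\theta_{ij}}(\bs{\gamma}^*,\bs{\theta}^*)=(a_{ij}-p_{ij})T_{ij}$, so that, using $H_{\bs{\theta}\bs{\theta}}=-V$,
\[
s_{\gamma_{ij}}^*(\bs{\gamma}^*,\bs{\theta}^*)=(a_{ij}-p_{ij})\widetilde{Z}_{ij},\qquad \widetilde{Z}_{ij}:=Z_{ij}+H_{\gamma\theta}V^{-1}T_{ij}.
\]
The crucial observation is that $\widetilde{Z}_{ij}$ is deterministic given the covariates, and that the summands $\{(a_{ij}-p_{ij})\widetilde{Z}_{ij}\}_{i\neq j}$ are mutually independent and centered, because the $n(n-1)$ directed edges are independent Bernoulli variables under \eqref{Eq:density:whole}. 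Thus $\sum_{i\neq j}s_{\gamma_{ij}}^*$ is a sum of $N$ independent mean-zero random vectors, and the whole problem becomes a central limit theorem for a triangular array.

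Second, I would pin down the covariance exactly. Writing $W=H_{\gamma\theta}V^{-1}$ and expanding
\[
\sum_{i\neq j}p_{ij}(1-p_{ij})\widetilde{Z}_{ij}\widetilde{Z}_{ij}^\top=\sum_{i\neq j}p_{ij}(1-p_{ij})(Z_{ij}+WT_{ij})(Z_{ij}+WT_{ij})^\top,
\]
I would invoke the two identities $\sum_{i\neq j}p_{ij}(1-p_{ij})T_{ij}T_{ij}^\top=V$ and $\sum_{i\neq j}p_{ij}(1-p_{ij})Z_{ij}T_{ij}^\top=-H_{\gamma\theta}$, both of which follow directly from the definitions of $V$, $T_{ij}$ and $H_{\gamma\theta}$. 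The cross and quadratic terms then collapse, leaving $-H_{\gamma\gamma}-H_{\gamma\theta}V^{-1}H_{\gamma\theta}^\top=N\,I_n(\bs{\gamma}^*)$ by \eqref{eq:I0:beta}. Consequently the normalized statistic $\tfrac{1}{\sqrt N}[I_n(\bs{\gamma}^*)]^{-1/2}\sum_{i\neq j}s_{\gamma_{ij}}^*$ has mean $\mathbf{0}$ and covariance exactly the $p\times p$ identity.

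Third, I would establish asymptotic normality by the Cram\'er--Wold device: for an arbitrary fixed unit vector $t\in\R^p$ I consider the scalar array $X_{ij}=t^\top[I_n(\bs{\gamma}^*)]^{-1/2}\widetilde{Z}_{ij}(a_{ij}-p_{ij})$, whose variances sum to $N$, and verify Lyapunov's condition $N^{-3/2}\sum_{i\neq j}\E|X_{ij}|^3\to0$. Since $|a_{ij}-p_{ij}|\le1$, the ratio is governed by $\max_{i\neq j}\|\widetilde{Z}_{ij}\|^3$ and $\|[I_n(\bs{\gamma}^*)]^{-1/2}\|^3$, and controlling these uniformly is the main obstacle. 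I would bound $\widetilde{Z}_{ij}$ by replacing $V^{-1}$ with its explicit approximation $S$ from Lemma~\ref{lemma:inverse:appro}; because each entry of $H_{\gamma\theta}S$ pairs a column sum $\|(H_{\gamma\theta})_{\cdot k}\|\le C\,v_{kk}$ against diagonal and $v_{2n,2n}^{-1}$ entries of order $v_{kk}^{-1}$, the factors $e^{c\|\bs{\theta}^*\|_\infty}$ cancel and $\widetilde{Z}_{ij}=O(1)$. The remaining danger is that $\lambda_{\min}(I_n(\bs{\gamma}^*))$ may be as small as $n^{-2\tau}$ when the Bernoulli variances are uniformly small, giving $\|[I_n(\bs{\gamma}^*)]^{-1/2}\|=O(n^{\tau})$; the Lyapunov ratio is then $O(N^{-1/2}n^{3\tau})=O(n^{-1+3\tau})$, which tends to zero precisely because $\tau<1/24<1/3$. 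This last bookkeeping---propagating the $e^{\|\bs{\theta}^*\|_\infty}$ and $\lambda_{\min}$ factors through the approximation $V^{-1}\approx S$ while keeping the Lyapunov ratio negligible under the growth condition on $\|\bs{\theta}^*\|_\infty$---is where the genuine work lies.
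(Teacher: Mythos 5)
Your proposal is correct and follows essentially the same route as the paper's proof: you exploit the independence of the dyads to write $\sum_{i\neq j}s_{\gamma_{ij}}^*(\bs{\gamma}^*,\bs{\theta}^*)$ as a sum of independent centered vectors $(a_{ij}-p_{ij})\widetilde{Z}_{ij}$, verify via the identities $\sum_{i\neq j}p_{ij}(1-p_{ij})T_{ij}T_{ij}^\top=V$ and $\sum_{i\neq j}p_{ij}(1-p_{ij})Z_{ij}T_{ij}^\top=-H_{\gamma\theta}$ that the variance is exactly $N I_n(\bs{\gamma}^*)$, and bound $\widetilde{Z}_{ij}$ uniformly by replacing $V^{-1}$ with the approximation $S$ of Lemma \ref{lemma:inverse:appro}, which is precisely the paper's argument. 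The only cosmetic difference is the final step, where you verify Lyapunov's condition directly (with explicit bookkeeping of $\lambda_{\min}(I_n)$ under the growth condition on $\|\bs{\theta}^*\|_\infty$) while the paper invokes the central limit theorem for bounded summands from Lo\'{e}ve; your version is, if anything, slightly more careful about possible degeneracy of $I_n(\bs{\gamma}^*)$.
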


\begin{proof}[Proof of Theorem \ref{theorem:covariate:asym}]

Recall that $\bs{\widehat{\theta}}(\bs{\gamma})=\arg\max_{\bs{\theta}} \ell( \bs{\gamma}, \bs{\theta})$.
A mean value expansion gives
\[
\sum_{i=1}^n \sum_{j\neq i} s_{\bs{\gamma}_{ij}}(\bs{\widehat{\gamma} }, \bs{\widehat{\theta}} )
- \sum_{i=1}^n \sum_{j\neq i} s_{\bs{\gamma}_{ij}}(\bs{\gamma }^*, \bs{\widehat{\theta}}(\bs{\gamma}^*)  )
=  \sum_{i=1}^n \sum_{j\neq i} \frac{\partial }{ \partial \bs{\gamma}^\top } s_{\bs{\gamma}_{ij}} (\bar{\bs{\gamma}},
\bs{\widehat{\theta}}(\bar{\bs{\gamma}})) (\bs{\widehat{\gamma}}-\bs{\gamma}^*),
\]
where $\bar{\bs{\gamma}}=t\bs{\gamma}^*+(1-t)\bs{\widehat{\gamma}}$ for some $t\in (0, 1)$.
By noting that $\sum_{i=1}^n \sum_{j\neq i} s_{\bs{\gamma}_{ij}}(\bs{\widehat{\gamma} }, \bs{\widehat{\theta}} )=0$, we have
\[
\sqrt{N}(\bs{\widehat{\gamma}}-\bs{\gamma}^*) = -
\Big[ \frac{1}{N} \sum_{i=1}^n \sum_{j\neq i} \frac{\partial }{ \partial \bs{\gamma}^\top } s_{\bs{\gamma}_{ij}} (\bar{\bs{\gamma}},
\bs{\hat{\theta}}(\bar{\bs{\gamma}})) \Big]^{-1}
\times \Big[\frac{1}{\sqrt{N}} \sum_{i=1}^n \sum_{j\neq i} s_{\bs{\gamma}_{ij}}(\bs{\gamma}^*, \bs{\hat{\theta}}(\bs{\gamma}^*) ) \Big].
\]
Since the dimension $p$ of $\bs{\gamma}$ is fixed, by Theorem \ref{theorem:con-beta}, we have
\[
-\frac{1}{N} \sum_{i=1}^n \sum_{j\neq i} \frac{\partial}{ \partial \bs{\gamma}^\top } s_{\bs{\gamma}_{ij}}
(\bar{\bs \gamma}, \bs{\widehat{\theta}}(\bar{\bs \gamma}) )
\stackrel{p}{\to }  I_*(\bs{\gamma}).
\]
Let $\bs{\hat{\theta}}^*=\bs{\widehat{\theta}}(\bs{\gamma}^*)$. Therefore,
\begin{equation}\label{eq:theorem4:aa}
\sqrt{N} (\bs{\widehat{\gamma}}-\bs{\gamma}^*) = I_*^{-1}(\bs{\gamma}) \times \Big[ \frac{1}{\sqrt{N}} \sum_{i=1}^n \sum_{j\neq i}
s_{\gamma_{ij}} ( \bs{\gamma}^*, \bs{\hat{\theta}}^*  )\Big] + o_p(1).
\end{equation}
By applying a third order Taylor expansion to the summation in brackets in \eqref{eq:theorem4:aa}, it yields
\begin{equation}\label{eq:gamma:asym:key}
\frac{1}{\sqrt{N}} \sum_{i=1}^n \sum_{j\neq i} s_{\gamma_{ij}} ( \bs{\gamma}^*,
\bs{\hat{\theta}}^* ) = S_1 + S_2 + S_3,
\end{equation}
where
\begin{equation*}
\begin{array}{l}
S_1  =  \frac{1}{\sqrt{N}} \sum_{i=1}^n \sum_{j\neq i} s_{\gamma_{ij}} ( \bs{\gamma}^*, \bs{\theta}^*)
+ \frac{1}{\sqrt{N}} \sum_{i=1}^n \sum_{j\neq i}
\Big[\frac{\partial}{\partial \bs{\theta}^\top } s_{\gamma_{ij}} (\bs{\gamma}^*, \bs{\theta}^*)\Big]( \bs{\hat{\theta}}^* - \bs{\theta}^* ), \\
S_2  =   \frac{1}{2\sqrt{N}} \sum_{k=1}^{2n-1} \Big[( \hat{\theta}_k^* - \theta_k^* ) \sum_{i=1}^n \sum_{j\neq i}
\frac{\partial^2 }{ \partial \theta_k \partial \bs{\theta}^\top } s_{\gamma_{ij}} ( \bs{\gamma}^*, \bs{\theta}^*)
\times ( \bs{\hat{\theta}}^* - \bs{\theta}^* ) \Big],  \\
S_3  =  \frac{1}{6\sqrt{N}} \sum_{k=1}^{2n-1} \sum_{l=1}^{2n-1} \{ (\hat{\theta}_k^* - \theta_k^*)(\hat{\theta}_l^* - \theta_l^*)
\Big[ \sum_{i=1}^n \sum_{j\neq i} \frac{ \partial^3 s_{\gamma_{ij}} (\bs{\gamma}^*, \bar{\theta}^* )}{ \partial \theta_k \partial \theta_l \partial \bs{\theta}^\top } \Big]
(\bs{\hat{\theta}}^*  - \bs{\theta}^* )\}.
\end{array}
\end{equation*}
Similar to the proof of Theorem 4 in \cite{Graham:2017}, we will show that (1) $S_1$ is asymptotically normal distribution;
(2) $S_2$ is the bias term having a non-zero probability limit; (3)$S_3$ is an asymptotically negligible
remainder term.

We work with $S_1$, $S_2$ and $S_3$ in reverse order.
We first evaluate the term $S_3$.
We calculate $g^{ij}_{klh}=\frac{ \partial^3 s_{\gamma_{ij}} (\gamma, \theta ) }{ \partial \theta_k \partial \theta_l \partial \theta_h }$ as follows.\\
(1) For different $k, l, h$, $g^{ij}_{klh}=0$. \\
(2) Only two values are equal. If $k=l=i\le n; h\ge n+1$,
$g^{ij}_{klh}=p_{ij}(1-p_{ij})(1-6p_{ij}+6p_{ij}^2)Z_{ij}$; for other cases, the results are similar.\\
(3) Three values are equal.
$g^{ij}_{klh}=p_{ij}(1-p_{ij})(1-6p_{ij}+6p_{ij}^2)Z_{ij}$ if $k=l=h=i\le n$;
$g^{ij}_{klh}=p_{ji}(1-p_{ji})(1-6p_{ji}+6p_{ji}^2)Z_{ji}$ if $k=l=h=j \ge n+1$. \\
Therefore, we have
\begin{eqnarray*}
&& \sum_{i=1}^n \sum_{j\neq i} \sum_{k, l, h}\frac{ \partial^3 s_{\gamma_{ij}} (\bs{\gamma}^*, \bar{\theta}^* )}{ \partial \theta_k \partial \theta_l \partial \theta_h } \\
&=&\frac{1}{2} \frac{1}{\sqrt{N}} \sum_{i=1}^n \sum_{j=1}^{n-1} Z_{ij}[p_{ij}(1-p_{ij})(1-6p_{ij}
+6p_{ij}^2)(\hat{\alpha}_i-\alpha_i^*)^2(\hat{\beta}_j-\beta_j^*)+ \\
&&p_{ji}(1-p_{ji})(1-6p_{ji}+6p_{ji}^2)(\hat{\alpha}_i-\alpha_i^*)(\hat{\beta}_j-\beta_j^*)^2].
\end{eqnarray*}
Let $\lambda_n = \|\bs{\hat{\theta}}^* - \bs{\theta}^* \|_\infty$. Note that $Z_{ij}$ lies in a
compact set $\Z$, and $p_{ij}(1-p_{ij})\le 1/4$, and $|(1-6p_{ij}+6p_{ij}^2)|\le 6$. By Lemma \ref{lemma:th4:d},
any element of $S_3$ is bounded above by
\begin{eqnarray*}
  \frac{ n(n-1)}{\sqrt{N}} \times   \frac{6}{4}\lambda_n^3   \times \sup_{z\in \Z} |z|
&=& 3\frac{ n(n-1) }{\sqrt{n(n-1)}} \times  \frac{ C^3 (\log n)^{3/2}  e^{24\|\theta^*\|_\infty } }{ n^{3/2} }
\times \sup_{z\in Z} |z| \\
& = & O\bigg(\frac{ (\log n)^{3/2}e^{24\|\theta^*\|_\infty } }{ \sqrt{n} } \bigg)=o(1).
\end{eqnarray*}

Similar to the calculation of deriving the asymptotic bias in Theorem 4 in \cite{Graham:2017}, we have
$S_2=B_*+o_p(1)$, where
\begin{equation}\label{definition:Bstar}
B_*=\lim_{n\to\infty} \frac{1}{2\sqrt{N}} \left[\sum_{i=1}^n \frac{  \sum_{j\neq i} p_{ij}(1-p_{ij})(1-2p_{ij})Z_{ij} }
{  \sum_{j\neq i} p_{ij}(1-p_{ij}) }+\sum_{j=1}^n \frac{  \sum_{i\neq j} p_{ij}(1-p_{ij})(1-2p_{ij})Z_{ij} }
{  \sum_{i\neq j} p_{ij}(1-p_{ij}) } \right].
\end{equation}

By Lemma \ref{lemma:th4:c}, similar to deriving the asymptotic expression of $S_1$ in \cite{Graham:2017}, we have
\[
S_1 = \frac{1}{\sqrt{N}} \sum_{i=1}^n \sum_{j\neq i} s_{\gamma_{ij}}^* (\bs{\gamma}^*, \bs{\theta}^*) + o_p(1),
\]
Therefore, it shows that equation \eqref{eq:gamma:asym:key} equal to
\begin{equation}\label{eq:proof:4-a}
\frac{1}{\sqrt{N}} \sum_{i=1}^n \sum_{j\neq i} s_{\gamma_{ij}}(\bs{\gamma}^*, \bs{\hat{\theta}}^* )
= \frac{1}{\sqrt{N}} \sum_{i=1}^n \sum_{j\neq i} s_{\gamma_{ij}}^* (\bs{\gamma}^*, \bs{\theta}^*) + B_* + o_p(1),
\end{equation}
with $\frac{1}{\sqrt{N}} \sum_{i=1}^n \sum_{j\neq i} s_{\gamma_{ij}}^* (\bs{\gamma}^*, \bs{\theta}^*)$ equivalent to the first two terms in \eqref{eq:gamma:asym:key}
and $B_*$ the probability limit of the third term in \eqref{eq:gamma:asym:key}.

Substituting \eqref{eq:proof:4-a} into \eqref{eq:theorem4:aa} then gives
\[
\sqrt{N}(\bs{\hat{\gamma}}- \bs{\gamma}^*) = I_*^{-1}(\gamma ) B_* + I_*^{-1}(\gamma) \frac{1}{\sqrt{N}} \sum_{i=1}^n \sum_{j\neq i}
s_{\gamma_{ij}}^* (\gamma^*, \theta^*) + o_p(1).
\]
Then Theorem \ref{theorem:covariate:asym} immediately follows from Lemma \ref{lemma:th4-b}.
\end{proof}


\begin{thebibliography}{}
\bibliographystyle{imsart-nameyear}

\setlength{\itemsep}{-1.5pt}


\bibitem[Adamic and Glance(2005)]{Adamic:Glance:2005}
{ Adamic, L. A.} and { Glance,  N.} (2005). The political blogosphere and the 2004 US Election.
{\it Proceedings of the WWW-2005 Workshop on the Weblogging Ecosystem}.

\bibitem[Amemiya(1985)]{Amemiya:1985}
{ Amemiya, T.} (1985). {\it Advanced Econometrics}. Cambridge, MA. Harvard University
Press.

\bibitem[Bader and Hogue(2003)]{Bader:Hogue:2003}
{ Bader, G. D.} and { Hogue, C. W. V.} (2003).
An automated method for finding molecular complexes in large protein interaction networks.
{\it BMC Bioinformatics}, 4:2, doi:10.1186/1471-2105-4-2.

\bibitem[Barab\'{a}si and Bonabau(2003)]{Barabasi:Bonabau:2003}
{ Barab\'{a}si, A. L.} and { Bonabau, E.} (2003). Scale-free networks. {\it Scientific
American}, 50--59.

\bibitem[Bradley and Terry(1952)]{bradleyterry52}
{Bradley, R. A.} and {Terry,
M. E.} (1952). The rank analysis of incomplete block designs I. The
method of paired comparisons. {\it Biometrika}, \textbf{39}, 324--345.


\bibitem[Brown(1986)]{Brown:1986}
Brown, L. D. (1986).
Fundamentals of statistical exponential families with applications in statistical decision theory.
{\it Lecture Notes-Monograph Series}. Hayward, California.


\bibitem[Burt et al.(2013)]{Burt:Kilduff:Tasselli:2013}
{ Burt, R. S., Kilduff, M.,} and { Tasselli, S.} (2013).
Social Network Analysis: Foundations and Frontiers on Advantage.
{\it Annual Review of Psychology}, {\bf 64}, 527--547.


\bibitem[Cai et al.(2018)]{Cai:2018}
Cai, W., Guan, G., Pan, R., Zhu, X., and Wang, H. (2018).
Network linear discriminant analysis. {\it Computational Statistics} \& {\it Data Analysis}, {\bf 117}, 32--44.

\bibitem[Chatterjee and Diaconis(2013)]{Chatterjee:Diaconis:2013}
{ Chatterjee, S.} and { Diaconis, P.} (2013). Estimating and understanding exponential random
graph models. {\it The Annals of Statistics}, {\bf 41}, 2428--2461.


\bibitem[Chatterjee et al.(2011)]{Chatterjee:Diaconis:Sly:2011}
{ Chatterjee, S., Diaconis, P.}, and { Sly, A.} (2011).
Random graphs with a given degree sequence. {\it Annals of Applied Probability}, \textbf{21}, 1400--1435.

\bibitem[Cruz-Gonzalez et al.(2017)]{Cruz-Gonzalez:Fern¡äandez-Val:Weidner:(2017)}
Cruz-Gonzalez, M., Fern\'{a}ndez-Val, I., and Weidner, M. (2017). Probitfe and logitfe: Bias corrections for probit
and logit models with two-way fixed effects. {\it The Stata Journal}, To appear

\bibitem[Diesner and Carley(2005)]{Diesner:Carley:2005}
{ Diesner, J.} and { Carley, K. M.} (2005).
Exploration of Communication Networks from the Enron Email Corpus.
{\it Proceedings of Workshop on Link Analysis, Counterterrorism and Security},
SIAM International Conference on Data Mining,  3--14.

\bibitem[Dzemski(2014)]{Dzemski:2014}
{ Dzemski, A.} (2014). An empirical model of dyadic link formation in a network
with unobserved heterogeneity. Preprint. Available at \url{http://pseweb.eu/ydepot/semin/texte1314/JMP\%20ANDREAS\%20DZEMKI.pdf}.


\bibitem[Dzemski(2017)]{Dzemski:2017}
{ Dzemski, A.} (2017). An empirical model of dyadic link formation in a network
with unobserved heterogeneity. Working Papers in Economics, No 698.


\bibitem[Erosheva et al.(2007)]{Erosheva:Fienberg:Joutard:2007}
{Erosheva, E. A.}, {Fienberg, S. E.}, and {Joutard, C.} (2007). Describing disability
through individual-level mixture models for multivariate binary data. {\it The Annals of Applied Statistics},
{\bf 1}, 502--537.

\bibitem[Fellows and Handcock(2012)]{Fellows:Handcock:2012}
{ Fellows, I.} and { Handcock, M. S.} (2012).
Exponential-family Random Network Models.
Available at \url{http://arxiv.org/abs/1208.0121}.

\bibitem[Fienberg(2012)]{Fienberg:2012}
{ Fienberg, S. E.} (2012). A brief history of statistical models for network analysis and open challenges.
{\it Journal of Computational and Graphical Statistics}, {\bf 21}, 825--839.

\bibitem[Fienberg and Wasserman(1981)]{Fienberg:Wasserman:1981a}
{ Fienberg, S. E.} and { Wasserman, S. S.} (1981). Categorical data analysis
of single sociometric relations. {\it Sociological Methodology}, {\bf 12}, 156--192.

\bibitem[Fienberg and Rinaldo(2007)]{Fienberg:Rinaldo:2007}
{Fienberg, S. E.} and {Rinaldo, A.} (2007). Three centuries of categorical data analysis:
Log-linear models and maximum likelihood estimation. {\it Journal of Statistical Planning and Inference}, {\bf 137},
3430--3445.

\bibitem[Fienberg and Rinaldo(2012)]{Fienberg:Rinaldo:2012}
{ Fienberg, S. E.} and  { Rinaldo, A.} (2012). Maximum likelihood estimation in log-linear models.
{\it The Annals of Statistics}, {\bf 40}, 996--1023.

\bibitem[Fern\'{a}ndez-Val and Weidner(2016)]{FVW2016}
{ Fern\'{a}ndez-V\'{a}l, I.} and { Weidner, M.} (2016). Individual and time effects in nonlinear
panel models with large $N$, $T$. {\it Journal of Econometrics}, {\bf 192}, 291--312.

\bibitem[Goldenberg et al.(2009)]{Goldenberg2009}
{ Goldenberg, A.}, {Zheng, A. X.}, {Feinberg, S. E.}, and { Airoldi, E. M.} (2009). A survey of statistical network models. {\it Foundations and Trends in Machine Learning}, {\bf 2}, 129--233.

\bibitem[Graham(2017)]{Graham:2017}
{ Graham, B. S.} (2017). An econometric model of link formation with degree heterogeneity. {\it Econometrica}, {\bf 85}, 1033--1063.


\bibitem[Graham(2016)]{Graham:2016}
Graham B. S. (2016). Homophily and transitivity in dynamic network formation.
NBER Working Paper, No. 22186. Available at \url{http://www.nber.org/papers/w22186}.





\bibitem[Haberman(1974)]{Haberman:1974}
{Haberman, S. J.} (1974). {\it The Analysis of Frequency Data}. Univ. Chicago Press, Chicago,
IL.

\bibitem[Haberman(1977)]{Haberman1977}
{ Haberman, S. J.} (1977). Maximum likelihood estimates in exponential response models.
{\it The Annals of Statistics}, {\bf 5}, 815--841.

\bibitem[Hahn and Newey(2004)]{Hahn:Newey:2004}
{ Hahn, J.} and { Newey W.} (2004). Jackknife and analytical bias reduction for
nonlinear panel data models.  {\it Econometrica}, {\bf 72}, 1295--1319.

\bibitem[Handcock(2003)]{Handcock:2003}
{ Handcock, M. S.} (2003). Assessing degeneracy in statistical models of social networks,
Working Paper 39, Techenical report, Center for Statistics and the Social Sciences,
University of Washington.

\bibitem[Hillar and Wibisono(2013)]{Hillar:Wibisono:2013}
{ Hillar, C.} and { Wibisono, A.} (2013). Maximum entropy distributions on graphs.
Available at \url{http://arxiv.org/abs/1301.3321}.

\bibitem[Hoeffding(1963)]{Hoeffding:1963}
{ Hoeffding, W.} (1963). Probability inequalities for sums of bounded random variables.
{\it Journal of the American Statistical Association}, {\bf 58}, 13--30.

\bibitem[Hoff(2009)]{Hoff:2009}
{ Hoff, P. D.} (2009). Multiplicative latent factor models for description and prediction of social networks. {\it Computational and Mathematical Organization Theory}, {\bf 15}, 261--272.

\bibitem[Holland and Leinhardt(1981)]{Holland:Leinhardt:1981}
{ Holland, P. W.} and { Leinhardt, S.} (1981). An exponential family of probability distributions for directed graphs (with discussion).
{\it Journal of the American Statistical Association}, {\bf 76}, 33--65.

\bibitem[Holland, Laskey and Leinhardt(1983)]{Holland:Laskey:Leinhardt:1983}
Holland, P. W., Laskey, K. B., and Leinhardt, S. (1983). Stochastic blockmodels: First steps. {\it Social Networks.} {\bf 5}, 109-137.


\bibitem[Jochmans(2017)]{Jochmans:2017}
Jochmans, K. (2017). Semiparametric analysis of network formation. {\it Journal of Business} \& {\it Economic Statistics}, To appear.


\bibitem[Juodis(2013)]{Juodis:2013}
Juodis, A. (2013). A note on bias-corrected estimation in dynamic panel data models. {\it Economics Letters}, {\bf 118}, 435--438.

\bibitem[Karwa and Slavkovi\'{c}(2016)]{Karwa:Slakovic:2016}
{ Karwa, V.} and { Slavkovi\'{c}, A.} (2016). Inference using noisy degrees-Differentially private beta model and synthetic graphs.
{\it The Annals of Statistics},  {\bf 44}, 87--112.

\bibitem[Kolaczyk(2009)]{Kolaczyk:2009}
Kolaczyk, E. D. (2009). {\it Statistical Analysis of Network Data: Methods and Models.} New York, Springer.

\bibitem[Krivitsky et al.(2011)]{Krivitsky:Handcock:Morris:2011}
Krivitsky, P. N., Handcock, M. S., and Morris, M. (2011). Adjusting for network
size and composition effects in exponential-family random graph models.
{\it Statistical Methodology}, 8, 319--339.


\bibitem[Lazega(2001)]{Lazega:2001}
{ Lazega, E.} (2001). {\it The Collegial Phenomenon: The Social Mechanisms of
Cooperation Among Peers in a Corporate Law Partnership.} Oxford University
Press, Oxford.


\bibitem[Lang(1993)]{Lang:1993}
{ Lang, S.} (1993). {\it Real and Functional Analysis.} Springer.

\bibitem[Lewisa et al.(2012)]{Lewisa:Gonzaleza:Kaufmanb:2012}
{ Lewisa, K., Gonzaleza, M.}, and { Kaufmanb, J.} (2012).
Social selection and peer influence in an online social network.
{\it Proceedings of the National Academy of Sciences of the United
States of America}, {\bf 109}, 68--72.

\bibitem[Lo\'{e}ve(1977)]{Loeve:1977}
{ Lo\'{e}ve, M.} (1977). {\it Probability theory I}. 4th ed. Springer, New York.

\bibitem[McPherson et al.(2001)]{McPherson:Lynn:Cook:2001}
{ McPherson, M., Lynn, S. L.}, and { Cook, J. M.} (2001). Birds of a feather:
homophily in social networks. {\it Annual Review of Sociology}, {\bf 27}, 415--444.

\bibitem[Nepusz et al.(2012)]{Nepusz:Yu:Paccanaro:2012}
{ Nepusz, T., Yu, H.}, and { Paccanaro, A.} (2012).
Detecting overlapping protein complexes in protein-protein interaction networks.
{\it Nature methods}, {\bf 18}, 471--472.

\bibitem[Newman(2002)]{Newman:2002}
{ Newman, M. E. J.} (2002). Spread of epidemic disease on
networks. {\it Physics Review E}, {\bf 66}, 016128.


\bibitem[Neyman and Scott(1984)]{Neyman:Scott:1948}
Neyman, J. and Scott, E. (1948). Consistent estimates based on partially consistent
observations. {\it Econometrica}, {\bf 16}, 1--32.


\bibitem[Olhede and Wolfe(2012)]{Olhede:Wolfe:2012}
{ Olhede, S. C.} and { Wolfe, P. J.} (2012). Degree-based network models. Available at
\url{http://arxiv.org/abs/1211.6537}.

\bibitem[Perry and Wolfe(2012)]{Perry:Wolfe:2012}
{Perry, P. O.} and {Wolfe, P. J.} (2012). Null models for network data. Available at
\url{http://arxiv.org/abs/1201.5871}.

\bibitem[Rasch(1960)]{Rasch1960}
{ Rasch, G.} (1960). {\it Probabilistic Models For Some Intelligence And Attainment Tests.}
Copenhagen: Paedagogiske Institut.

\bibitem[Rinaldo et al.(2011)]{Rinaldo:Petrovic:Fienberg:2011}
{Rinaldo, A.}, {Petrovi\'{c}, S.}, and {Fienberg, S.} (2011). Maximum likelihood estimation
in network models. Technical report. Available at \url{http://arxiv.org/abs/1105.6145}.

\bibitem[Rinaldo et al.(2013)]{Rinaldo2013}
{ Rinaldo, A., Petrovi\'{c}, S.}, and { Fienberg, S. E.} (2013). Maximum likelihood estimation in the $\beta$-model.
{\it The Annals of Statistics}, {\bf 41}, 1085--1110.


\bibitem[Robins et, al.(2007a)]{Robins.et.al.2007a}
{ Robins, G., Pattison, P., Kalish, Y.,} and { Lusher, D.} (2007a). An introduction to exponential random
graph ($p^*$) models for social networks. {\it Social Networks}, {\bf 29}, 173--191.

\bibitem[Robins et, al.(2007b)]{Robins.et.al.2007b}
{ Robins, G., Snijders, T., Wang, P., Handcock, M.}, and { Pattison, P.} (2007b). Recent
developments in exponential random graph (p*) models for social networks.
{\it Social Networks}, {\bf 29}, 192--215.

\bibitem[Sadeghi and Rinaldo(2014)]{Sadeghi:Rinaldo:2014}
{ Sadeghi, K.} and { Rinaldo, A.} (2014).
Statistical models for degree distributions of networks, NIPS 2014 Workshop ``From Graphs to Rich Data".
Available at \url{http://arxiv.org/abs/1411.3825}.

\bibitem[Schweinberger and Handcock(2015)]{Schweinberger:Handcock:2015}
{ Schweinberger, M.} and { Handcock, M. S.} (2015).
Local dependence in random graph models: characterization, properties and statistical inference.
{\it Journal of the Royal Statistical Society: Series B}, {\bf 77}, 647--676.

\bibitem[Shalizi and Rinaldo(2013)]{Shalizi:Rinaldo:2013}
{ Shalizi, C. R.} and { Rinaldo, A.} (2013). Consistency under sampling of exponential random graph models.
{\it The Annals of Statistics}, {\bf 41}, 508--535.


\bibitem[Simons and Yao(1999)]{simons1999}
{Simons, G.} and {Yao, Y. C.} (1999). Asymptotics when
the number of parameters tends to infinity in the Bradley-Terry
model for paired comparisons. {\it The Annals of Statistics}, \textbf{27}, 1041--1060.

\bibitem[Van Duijn et al.(2004)]{VanDuijnSnijdersZijlstra:2004}
{ Van Duijn, M. A. J., Snijders, T. A. B.}, and { Zijlstra, B. J. H.} (2004).
$p_2$: a random effects model with covariates for directed graphs.
{\it Statistica Neerlandica}, {\bf 58}, 234--254.

\bibitem[Wang and Wong(1987)]{Wang:Wong:1987}
Wang, Y. J. and Wong, G. Y. (1987). Stochastic blockmodels for directed graphs.
{\it Journal of the American Statistical Association}, {\bf 82}, 9--19.


\bibitem[Wu(1997)]{Wu:1997}
{ Wu, N.} (1997). {\it The Maximum Entropy Method.} New York, Springer.

\bibitem[Yan and Leng(2015)]{Yan:Leng:2013}
{ Yan, T.} and { Leng, C.} (2015). A simulation study of the $p_1$ model. {\it Statistics and Its Interface}, {\bf 8}, 255--266.

\bibitem[Yan et al.(2016)]{Yan:Leng:Zhu:2016}
{ Yan, T.}, { Leng, C.}, and { Zhu, J.} (2016). Asymptotics in directed exponential random graph models with an increasing bi-degree sequence. {\it The Annals of Statistics}, {\bf 44}, 31--57.

\bibitem[Yan and Xu(2013)]{Yan:Xu:2013}
{ Yan, T.} and { Xu, J.} (2013).
A central limit theorem in the $\beta$-model for undirected random graphs with a diverging number of vertices.
{\it Biometrika}, {\bf 100}, 519--524.

\bibitem[Yin(2015)]{Yin:2015}
{ Yin, M.} (2015).
A detailed investigation into near degenerate exponential random graphs.
Preprint. Available at \url{http://arxiv.org/abs/1512.06563}.

\end{thebibliography}
\end{document}